\documentclass[a4paper,USenglish,cleveref,thm-restate]{lipics-template/lipics-v2021}

\PassOptionsToPackage{hyphens}{url}
\usepackage{hyperref}

\usepackage{booktabs}
\usepackage[table]{xcolor}
\usepackage{siunitx}
\usepackage{tikz}
\usepackage{xcolor}
\usetikzlibrary{fit, calc, decorations.pathreplacing}

\usepackage{caption}
\usepackage{subcaption}

\usepackage{amsmath}
\usepackage{amsfonts}
\usepackage{amssymb}

\usepackage[capitalise]{cleveref}
\usepackage{etoolbox}
\usepackage{thmtools}

\usepackage{xspace}

\usepackage[algo2e, linesnumbered, vlined, ruled]{algorithm2e}

\graphicspath{{./figures/}}

\bibliographystyle{plainurl}

\def\ie{i.e.\xspace}
\def\eg{e.g.\xspace}

\def\Oh#1{\ensuremath{\mathcal O\!\left(#1\right)}}
\def\scan#1{\ensuremath{\text{scan}(#1)}}
\def\sort#1{\ensuremath{\text{sort}(#1)}}
\def\nproc{\mathcal{P}}

\newcommand{\nlpagen}{\normalfont{SeqPolyPA}\xspace}
\newcommand{\pnlpagen}{\normalfont{ParPolyPA}\xspace}
\newcommand{\dyngen}{\normalfont{MVNPolyPa}\xspace}

\newcommand{\dyngenrem}{\normalfont{MVNPolyPa}\textsuperscript{remove}\xspace}

\newcommand{\emgenpa}{\normalfont{EM-GenPA}\xspace}
\renewcommand{\deg}{\ensuremath{\mathrm{d}}}

\newcommand{\dv}[1]{\ensuremath{d(#1)}}

\newcommand{\dvi}[2]{\ensuremath{d_{#2}(#1)}}

\declaretheorem[style=mythmstyle,name=Definition]{mydef}

\declaretheorem[style=mythmstyle,name=Observation]{myobs}

\hideLIPIcs

\title{Parallel and I/O-Efficient Algorithms for Non-Linear Preferential Attachment} 

\titlerunning{Algorithms for Non-Linear Preferential Attachment} 

\author{Daniel Allendorf}{Goethe University Frankfurt, Germany}{dallendorf@ae.cs.uni-frankfurt.de}{}{}
\author{Ulrich Meyer}{Goethe University Frankfurt, Germany}{umeyer@ae.cs.uni-frankfurt.de}{}{}
\author{Manuel Penschuck}{Goethe University Frankfurt, Germany}{mpenschuck@ae.cs.uni-frankfurt.de}{}{}
\author{Hung Tran}{Goethe University Frankfurt, Germany}{htran@ae.cs.uni-frankfurt.de}{}{}

\authorrunning{D.~Allendorf, U.~Meyer, M.~Penschuck and H.~Tran}

\Copyright{Daniel Allendorf, Ulrich Meyer, Manuel Penschuck and Hung Tran}

\ccsdesc[500]{Mathematics of computing~Random graphs}

\keywords{Random Graphs, Graph Generator, Preferential Attachment}

\supplement{
The internal memory algorithms are maintained at \url{https://github.com/massive-graphs/nonlinear-preferential-attachment}.
The implementation of the dynamic weighted sampling data structure of~\cite{DBLP:journals/mst/MatiasVN03} is independently maintained as the Rust crate \url{https://crates.io/crates/dynamic-weighted-index}.
The external memory algorithms are available at \url{https://github.com/massive-graphs/extmem-nlpa}.
An archive containing the frozen source code of all experiments and most of the raw data collected can be found on \url{https://zenodo.org/record/7318118}.
}

\funding{This work was supported by the Deutsche Forschungsgemeinschaft (DFG) under grant ME 2088/5-1 (FOR~2975 --- Algorithms, Dynamics, and Information Flow in Networks}

\begin{document}

\maketitle

\normalsize

\begin{abstract}
Preferential attachment lies at the heart of many network models aiming to replicate features of real world networks.
To simulate the attachment process, conduct statistical tests, or obtain input data for benchmarks, efficient algorithms are required that are capable of generating large graphs according to these models.

Existing graph generators are optimized for the most simple model, where new nodes that arrive in the network are connected to earlier nodes with a probability $P(h) \propto d$ that depends linearly on the degree $d$ of the earlier node $h$.
Yet, some networks are better explained by a more general attachment probability $P(h) \propto f(d)$ for some function $f \colon \mathbb N~\to~\mathbb R$.
Here, the polynomial case $f(d) = d^\alpha$ where $\alpha \in \mathbb R_{>0}$ is of particular interest.

In this paper, we present efficient algorithms that generate graphs according to the more general models.
We first design a simple yet optimal sequential algorithm for the polynomial model.
We then parallelize the algorithm by identifying batches of independent samples and obtain a near-optimal speedup when adding many nodes.
In addition, we present an I/O-efficient algorithm that can even be used for the fully general model.
To showcase the efficiency and scalability of our algorithms, we conduct an experimental study and compare their performance to existing solutions.
\end{abstract}

\section{Introduction}
\label{sec:introduction}
Networks govern almost every aspect of our modern life ranging from natural processes in our ecosystem, over urban infrastructure, to communication systems.
As such, network models have been studied by many scientific communities and many resort to random graphs as the mathematical tool to capture them~\cite{bollobas1985random,albertbarabasi02,barabasi2016network}.

A well-known property of observed networks from various domains is scale-freeness, typically associated with a powerlaw degree distribution.
Barab\'{a}si and Albert~\cite{BarabasiAlbert99} proposed a simple random graph model to show that growth and a sampling bias known as \emph{preferential attachment} produce such degree distributions.

In fact, a plethora of random graph models rely on preferential attachment (e.g., \cite{DBLP:journals/jasis/Price76,DBLP:conf/soda/BollobasBCR03,krapivsky2001degree,DBLP:conf/cocoon/KleinbergKRRT99,PhysRevE.65.026107,cond-mat/0106144,DBLP:conf/websci/KunegisBM13}).
At heart, they grow a graph instance, by iteratively introducing new nodes which are connected to existing nodes, so-called \emph{hosts}.
Then, preferential attachment describes a positive feedback loop on, for instance, the host's degree;
nodes with higher degrees are favored as hosts and their selection, in turn, increases their odds of being sampled again.

Most aforementioned models use \emph{linear} preferential attachment and, consequently, sampling algorithms are optimized for this special case~\cite{batagelj2005efficient, DBLP:conf/alenex/MeyerP16, DBLP:journals/ipl/Sanders016, DBLP:conf/sc/AlamKM13, DBLP:conf/sac/AzadbakhtBBA16, DBLP:journals/dase/AlamPS19} (see also \cite{DBLP:journals/corr/abs-2003-00736} for a recent survey).
Some observed networks however, can be better explained with \emph{polynomial} preferential attachment~\cite{DBLP:conf/websci/KunegisBM13}.

In this paper, we focus on efficient algorithmic techniques of sampling hosts for non-linear preferential attachment.
Hence, we assume the following random graph model but note that it is straight-forward to adopt our algorithms to most established variants\footnote{In particular, we forbid loops and multi-edges, e.g. our algorithms generate \emph{simple} graphs.}.

\begin{mydef}[Preferential attachment]\label{def:pa}
We start with an arbitrary so-called \emph{seed graph} $G_0$ with $n_0$ nodes and $m_0$ edges.
We then iteratively add $N$ new nodes $v_{n_0 + 1}, \ldots, v_{n_0 + N}$ and connect each to $\ell \le n_0$ \emph{different} hosts.
The resulting graph $G_N$ has $n= n_0 + N$ nodes and $m=m_0 + N\ell$ edges.

The probability to select a node~$h$ with degree~$d$ as host is governed by $P(h) \propto f(d)$.
In the case where $f(d) = d^{\alpha}$ for some constant $\alpha \in \mathbb R_{>0}$, we speak of \emph{polynomial} preferential attachment, and if $\alpha = 1$, of \emph{linear} preferential attachment.
\end{mydef}

\subsection*{Related Work}

A wide variety of algorithms have been proposed for the linear case.
Brandes and Batagelj gave the first algorithm to run in time linear in the size of the generated graph \cite{batagelj2005efficient}.
Their algorithm exploits a special structure of linear preferential attachment: given a list of the edges in the graph, we may simply select an edge $e~= \{u, v\}$ uniformly at random and then toss a fair coin to decide on either $u$ or $v$ as host. 
\cite{DBLP:journals/ipl/Sanders016} gave a communication-free distributed parallelization of the algorithm, and \cite{DBLP:conf/alenex/MeyerP16} proposed a parallel I/O-efficient variant for use in external memory.
A different shared memory parallel algorithm was also proposed by~\cite{DBLP:conf/sac/AzadbakhtBBA16}, and~\cite{DBLP:conf/sc/AlamKM13, DBLP:journals/dase/AlamPS19} gave algorithms for distributed memory.

Note that we may simulate preferential attachment by using a dynamic data structure that allows fast sampling from a discrete distribution.
While rather complex solutions for this more general problem exist (e.g. see \cite{DBLP:journals/tomacs/RajasekaranR93, DBLP:conf/icalp/HagerupMM93, DBLP:journals/mst/MatiasVN03}), none provide guarantees on their performance in external memory, and we are not aware of any parallelization that is practical for our use case.

\subsection*{Our contribution}

We first present a sequential algorithm for internal memory (see \cref{sec:seq-nonlinear-pa}).
Our algorithm runs in expected time linear in the size of the generated graph and can be used for any case $P(h) \propto d^{\alpha}$ where $\alpha \in \mathbb R_{\geq 0}$.
It builds on a simple data structure for drawing samples from a dynamic distribution that is especially suited to the structure of updates in preferential attachment but may be of independent interest.

In \cref{sec:par-nonlinear-pa}, we show how to identify batches of independent samples for the polynomial model.
As independent samples can be drawn concurrently, we may use this technique to parallelize a sequential algorithm for the model.
Consequently, we apply this technique to our sequential algorithm and obtain a shared memory parallel algorithm with an expected runtime of $ \mathcal O((\sqrt{N} + N / \nproc) \log \nproc)$.

For the external memory setting (\cref{sec:external-generalized-pa}), we extend the algorithm of \cite{DBLP:conf/alenex/MeyerP16} by using a two-step process: in the first phase, we sample the degrees of the hosts of each node, and in the second phase, we sample the hosts of the nodes.
Our algorithm requires $\Oh{\sort{n_0 + m}}$ I/Os and even applies to the most general case where $P(h) \propto f(d)$ for any $f \colon \mathbb N \to \mathbb R$.

In an empirical study (\cref{sec:experiments}), we demonstrate the efficiency and scalability of our algorithms.
We find that our sequential algorithms incur little slowdown over existing solutions for the easier linear case.
In addition, our parallel algorithm obtains speed-ups over the sequential algorithm of $32$ to $46$ using $63$ processors.

\subsection*{Preliminaries and notation}
\label{sec:preliminaries}
Given a graph $G=(V,E)$ and a node $v \in V$, define the \emph{degree} $\dvi{v}{G} = \left|\{ u\, \colon \{u,v\} \in E \}\right|$ as the number of edges incident to node $v$, and let $\Delta_G = \max_{v \in V} \dvi{v}{G}$ denote the maximum degree in $G$.
For a sequence of graphs $G_1, \dots, G_N$, we also write $\dvi{v}{G_i}$ as $\dvi{v}{i}$ and $\Delta_{G_i}$ as $\Delta_i$ or drop the subscript if clear by context.

We analyze our parallel algorithm using the CREW-PRAM model (see \cite{jaja1992introduction}) on $\nproc$ processors (PUs).
This machine model allows concurrent reads of the same memory address in constant time, but disallows parallel writes to this same address.
Potentially conflicting writes to the same addresses can, however, be simulated in time $\Oh{\log \nproc}$ per access~\cite{jaja1992introduction}.
We use two methods of conflict resolution, namely minimum (storing the smallest value written to an address) and summation (storing the sum of values);
both are practical, since they are either directly supported by modern parallel computers or can be efficiently simulated.

For the analysis of our external memory algorithms we use the commonly accepted I/O-model of Aggarwal and Vitter~\cite{DBLP:journals/cacm/AggarwalV88}.
The model features a memory hierarchy consisting of two layers, namely the fast internal memory holding up to $M$ items, and a slow disk of unbounded size.
Data between the layers is transferred using so-called I/Os, where each I/O transfers a block of $B$ consecutive items.
Performance is measured in the number of I/Os it requires.
Common tasks of many algorithms include: (i) scanning $n$ consecutive items requires $\scan{n} := \Theta(n/B)$ I/Os, (ii) sorting $n$ consecutive items requires $\sort{n} := \Theta \big((n/B)\cdot\log_{M/B}(n/B)\big)$ I/Os, and (iii) pushing and popping $n$ items into an external priority queue requires $\Oh{\sort{n}}$ I/Os \cite{DBLP:journals/algorithmica/Arge03}.

\section{Sequential Algorithm}
\label{sec:seq-nonlinear-pa}
In this section, we describe a sequential internal memory algorithm for polynomial preferential attachment, \ie the probability to select node~$h$ as host is $P(h)~\propto~\dv{h}^\alpha$ for a constant $\alpha \in \mathbb R_{\geq 0}$.
A parallelization is discussed in \cref{sec:par-nonlinear-pa}.
To simplify the description, we assume that new nodes only select one host ($\ell = 1$); generalization is straight-forward by drawing multiple hosts per node and rejecting duplicates.

We also note that the algorithm can in principle be used for any function $f(d)$ that is non-decreasing.
However, analyzing the runtime for other $f(d)$ is not trivial.
In particular, \cref{th:seq-linear-space} requires the pre-asymptotic degree distribution of the generated graphs to have certain properties, but the exact distribution is not known even in the polynomial case.

\subsection{Sampling Method}
\label{subsec:seq-ars}

To simulate preferential attachment, we require an efficient sampling method which selects a host in each step.
Depending on the seed graph and parameters chosen, the host distribution $P(h)$ may also undergo significant changes as more and more nodes are added to the graph\footnote{We refer the interested reader to \cite{krapivsky2000connectivity} for an analysis of the degree distributions of non-linear preferential attachment graphs.}.
Therefore, the method used should be capable of adapting to these changes in order to be efficient for adding any number of nodes to a graph.

Our sampling method builds on rejection sampling, a general technique for sampling from a target distribution by using an easier proposal distribution.
To showcase rejection sampling, we consider a simple but suboptimal scheme for polynomial preferential attachment.
Let the graph be given as an array of edges $E = [e_1, \dots, e_m]$ in arbitrary order.
Now, sample an edge $e_i = \{u, v\}$ uniformly at random, then, randomly choose either $u$ or $v$ using a fair coin.
Based on the observation that node~$h$ with degree~$\dv{h}$ appears $\dv{h}$ times in $E$, we propose $h$ with probability $\dv{h} / 2m$.
Now, if $\alpha < 1$, accept $h$ with probability $(1 / \dv{h})^{1 - \alpha}$, or, if $\alpha \geq 1$, accept $h$ with probability $(\dv{h} / \Delta)^{\alpha - 1}$, otherwise, restart with a new proposal.

Observe that this scheme implements rejection sampling with the host distribution of linear preferential attachment as proposal distribution.
It can be shown that this scheme provides samples in constant time if $n_0 = \Oh{1}$ as $n \to \infty$ by using known properties of the asymptotic degree distributions for $\alpha < 1$ and $\alpha \geq 1$.
However, the scheme is inefficient if we wish to add only a few nodes to a larger seed graph.
For instance, consider running the scheme for $\alpha = 2$ on a seed graph of one node with degree $\sqrt{n_0} - 1$ and $n_0 - 1$ nodes with degree $1$.
It is straightforward to check that this results in an initial acceptance probability of $\Oh{1 / \sqrt{n_0}}$.
The issue is that the seed graph degree distribution differs significantly from the limit degree distribution, which causes the acceptance probability to be small until we have added many times over the initial number of nodes.

To remedy this issue, we combine rejection sampling with a dynamic proposal distribution that adapts to the target distribution.
To this end, let $S = \{1, \dots, n\}$ be a set of indices we wish to sample from and let $f\colon S \to \mathbb R_{\geq 0}$ be a function giving the weight of each element, i.e. the distribution on $S$ is $f(i) / W$ where $W = \sum_{j \in S} f(j)$ gives the proper normalization.
The initial construction of our data structure is comparable to a variant of the alias method \cite{DBLP:conf/stoc/BringmannL13}.
We first initialize an empty list or array\footnote{
	An efficient implementation requires fast sampling from $P$ and inserting into $P$;
	an array with table doubling (\eg \cite{DBLP:books/daglib/0023376}) is an easy choice with constant expected/amortized time per operation.
} $P$, and then for each element $i \in S$, add $i$ exactly $c(i) = \lceil f(i) n / W \rceil$ times to $P$.
We call $c(i)$ the \emph{count} of $i$.
As $P$ is used to maintain the proposal distribution, we refer to $P$ as the \emph{proposal list}\footnote{
	We may also think of $P$ as a compression of the target distribution. Rejection sampling then serves the purpose of correcting any errors caused by the loss of information.
}.
Having constructed $P$, it is easy to verify that we may sample an element according to the target distribution by first selecting a uniform random element $i$ from $P$, and then accepting with probability proportional to $f(i) / c(i)$.
If the distribution changes, we update $P$ as follows: to add a new element $i$, we calculate $c(i)$ as during the construction phase and add $i$ exactly $c(i)$ times to $P$.
If the weight of an existing element $i$ increases, we recalculate its count $c(i)$ and add $i$ to $P$ accordingly.

\subsection{SeqPolyPA}
\label{subsec:seq-nlpagen}

\begin{algorithm2e}[t]
	\KwData{Seed graph $G_0 = (V_0, E_0)$, with $n_0 = |V_0|$, $m_0 = |E_0|$, requested nodes $N$, parameter $\alpha > 0$}
	$P_0 \gets [v_1, \dots, \underbrace{v_i, \dots, v_i}_{\lceil \dvi{v_i}{0}^\alpha n_0 / W_0 \rceil \text{ times}}, \dots, v_{n_0}]$\;
	\For(\hfill \CommentSty{// Add new node}){$i \in [1, \ldots, N]$}{
		\Repeat(\hfill \CommentSty{// Sample a host}){accepted}{
			Select candidate $h$ uniformly from $P_{i-1}$\;
			Accept with probability $w_{i-1}(h) / \max_{v \in V_{i-1}} w_{i-1}(v)$ where $w_{i-1}(v) := \dvi{v}{i-1}^\alpha / c_{i-1}(v)$ and $c_{i-1}(v) := \#$occurences of $v$ in $P_{i-1}$\;\label{alg:suplingen-reject}
		}
		$V_i \gets V_{i-1} \cup \{v_{n_0 + i}\}$ 		\tcp*{Update graph}
		$E_i \gets E_{i-1} \cup \{\{v_{n_0 + i}, h\}\}$\;
		$P_i \gets P_{i-1} + [v_{n_0 + i}]$ \tcp*{Update $P$}
		\While{$w_i(h) > W_i / n_i$}{
			$P_i \gets P_i + [h]$\;
		}
	}

	Return $G_N =(V_N, E_N)$\;
	\caption{\nlpagen}
	\label{algo:nlpagen}
\end{algorithm2e}

Our \nlpagen algorithm implements \cref{subsec:seq-ars} as detailed in Algorithm \ref{algo:nlpagen}.
First initialize the proposal list $P_0$ by setting the count of each node $v$ of the seed graph to the optimal count $c(v) = \lceil \dvi{v}{0}^\alpha n_0 / W_0 \rceil$.
Now, in each step $i$, a new node $v_{n_0 + i}$ arrives and has to be connected to the graph by linking it to some earlier node $h$.
To this end, select a candidate $h$ uniformly at random from $P_{i-1}$.
Then, accept $h$ with probability $w_{i-1}(h) /  \max_{v \in V_{i-1}} w_{i-1}(v)$ where $w_{i-1}(v) = \dvi{v}{i-1}^\alpha / c_{i-1}(v)$, otherwise, reject $h$ and restart with a new proposal.
Once a host $h$ has been accepted, add the new node $v_{n_0 + i}$ to $V_{i-1}$, and add the edge $\{v_{n_0 + i}, h \}$ to $E_{i-1}$ to obtain the new graph $G_i = (V_i, E_i)$.
To reflect the new distribution induced by $G_i$, adjust $P$ as follows: first, add $v_{n_0 + i}$ to $P_{i-1}$ to obtain $P_i$, then, add $h$ to $P_i$ until $w_i(h) \leq W_i / n_i$.

In the following, we establish that \nlpagen produces the correct output distribution.

\begin{theorem}
	\label{th:seq-super-cor}
	\nlpagen samples host~$h \in V$ with probability $\dv{h}^\alpha / W$.
\end{theorem}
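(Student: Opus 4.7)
The plan is to cast the body of the \textbf{repeat}-loop as one trial of standard rejection sampling and then invoke the usual correctness argument. The key data I will track is, for iteration $i$, the proposal list $P_{i-1}$ together with the counts $c_{i-1}(v)$ induced by it.

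First, I would establish by induction on $i$ the invariant that every node $v \in V_{i-1}$ satisfies $c_{i-1}(v) \ge 1$ and in particular appears in $P_{i-1}$. For $i=1$ this holds by the initialization $c_0(v) = \lceil d_0(v)^\alpha n_0 / W_0 \rceil \ge 1$ (assuming the seed graph has no isolated vertices, otherwise the statement is interpreted with $d(h)^\alpha = 0$ and no probability mass on such $h$). For the inductive step, new nodes are added to $P$ with count $1$ and existing counts are never decreased. This guarantees that every host has positive probability of being proposed and, combined with $w_{i-1}(v) \le \max_u w_{i-1}(u)$ (trivial), that the rejection step is well-defined.

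Next, I would compute the single-trial acceptance probability for a fixed candidate host $h \in V_{i-1}$. The probability of proposing $h$ in line~4 is $c_{i-1}(h)/|P_{i-1}|$, and the conditional probability of accepting it in line~\ref{alg:suplingen-reject} is $w_{i-1}(h) / \max_v w_{i-1}(v)$. Substituting $w_{i-1}(h) = d_{i-1}(h)^\alpha / c_{i-1}(h)$, the $c_{i-1}(h)$ factors cancel and I obtain
\[
	\Pr[\text{propose and accept } h] \;=\; \frac{d_{i-1}(h)^\alpha}{|P_{i-1}| \cdot \max_v w_{i-1}(v)},
\]
which is proportional to $d_{i-1}(h)^\alpha$ with a normalization constant that does not depend on $h$.

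Finally, since $\max_v w_{i-1}(v)$ is finite and positive, the per-trial acceptance probability is bounded below by a positive constant, so the \textbf{repeat}-loop terminates almost surely. Conditioning on termination, the probability that $h$ is the accepted candidate is obtained by normalizing the previous display over $V_{i-1}$, which yields
\[
	\Pr[h \text{ is selected as host in step } i] \;=\; \frac{d_{i-1}(h)^\alpha}{\sum_{v \in V_{i-1}} d_{i-1}(v)^\alpha} \;=\; \frac{d_{i-1}(h)^\alpha}{W_{i-1}},
\]
as required. There is no real obstacle; the only subtlety is checking that the proposal list always covers $V_{i-1}$ with strictly positive counts, which is precisely what the maintenance rules in the algorithm enforce.
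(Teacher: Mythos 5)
Your proposal is correct and follows essentially the same route as the paper: compute the single-trial probability $c_{i-1}(h)/|P_{i-1}| \cdot w_{i-1}(h)/\max_v w_{i-1}(v) = \dvi{h}{i-1}^\alpha/\bigl(|P_{i-1}|\max_v w_{i-1}(v)\bigr)$ and observe that the normalization is independent of $h$, so the accepted host is distributed proportionally to $\dvi{h}{i-1}^\alpha$. The paper makes the final step explicit by summing the geometric series over rejected trials (with $q = 1 - W/W'$ where $W' = |P|\max_v w(v) \ge W$), which is equivalent to your conditioning-on-acceptance argument; your added check that every node of $V_{i-1}$ has a positive count in $P_{i-1}$ is a harmless extra detail left implicit in the paper.
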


\begin{proof}
	The probability $p(h)$ that node $h$ is accepted after a single proposal is
	\begin{align}
		\label{eq:success-h-at-first}
		\nonumber
		p(h) = \underbrace{\frac{c(h)}{|P|}}_{\text{propose $h$}} \underbrace{\frac{w(h)}{\max_{v \in V} w(v)}}_\text{accept $h$} = \frac{\dv{h}^{\alpha}}{|P| \max_{v \in V} w(v)}.
	\end{align}
	Let $W' := |P| \max_{v \in V} w(v)$.
	Then, we have
	\begin{equation}
		\nonumber
		W' = |P| \max_{v \in V} w(v) \geq \sum_{v \in V} c(v) w(v) = \sum_{v \in V} \dv{v}^{\alpha} = W.
	\end{equation}
	With the remaining probability $q = 1 - W / W'$, the first proposal is rejected, and another node is proposed.
	Thus, the overall probability of sampling node $h$ is
	\begin{equation}
		\nonumber
		p(h) + p(h) q + \dots = p(h) \sum_{k = 0}^\infty q^k = p(h) \frac{1}{1 - q} = \frac{\dv{h}^{\alpha}}{W}
	\end{equation}
	as claimed.
\end{proof}

\noindent
Next, we show that \nlpagen runs in expected time linear in the size of the generated graph.
We first analyze the memory usage.

\begin{lemma}
	\label{th:seq-linear-space}
	\normalfont Given a seed graph $G_0=(V_0, E_0)$ with $n_0 = |V_0|$ nodes, \nlpagen adds $N$ new nodes to $G_0$ using a proposal list $P$ of expected size $\Oh{n_0 + N}$ if $\alpha \leq 1$ or $\alpha > 1$ and $n_0 = \Oh{1}$\footnote{We remark that this is a correction over the conference version of the article \cite{allendorf2023parallel} in which the condition $n_0 = \Oh{1}$ if $\alpha > 1$ was missing.}.
\end{lemma}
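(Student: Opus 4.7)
The plan is to bound $\mathbb{E}[|P_N|]$ by controlling the final counts $c_N(v)$ for every node individually and then summing. First I would isolate the following invariant. Let $t_v$ be the last step at which a copy of $v$ was inserted into $P$, with $t_v = 0$ if $v$ only received its initial copies. Both the ceiling in the initial construction and the termination criterion of the inner while-loop yield the same upper bound
\[
	c_N(v) \;=\; c_{t_v}(v) \;\le\; \left\lceil \frac{\dvi{v}{t_v}^{\alpha}\, n_{t_v}}{W_{t_v}} \right\rceil \;\le\; \frac{\dvi{v}{t_v}^{\alpha}\, n_{t_v}}{W_{t_v}} + 1.
\]
Summing over $v$ and using $|P_0| \le 2n_0$ gives the master estimate
\[
	|P_N| \;\le\; n_N + \sum_{v \in V_N} \frac{\dvi{v}{t_v}^{\alpha}\, n_{t_v}}{W_{t_v}},
\]
so the remaining task is to bound the expectation of this sum by $\Oh{n_0 + N}$.

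\emph{Case $\alpha \le 1$.} Since every inserted node has degree at least $1$ and $\alpha \ge 0$, we have $d^\alpha \ge 1$, whence $W_t \ge n_t$; conversely, $d^\alpha \le d$ for $d \ge 1$ and $\alpha \le 1$, so $W_t \le 2 m_t = \Oh{n_0 + t}$. Hence $n_t/W_t = \Theta(1)$ uniformly in $t$, and the master estimate reduces to $\sum_v \dvi{v}{t_v}^\alpha n_{t_v}/W_{t_v} = \Oh{\sum_v \dvi{v}{t_v}^\alpha} = \Oh{W_N} = \Oh{n_0 + N}$.

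\emph{Case $\alpha > 1$ with $n_0 = \Oh{1}$.} Here I would invoke the condensation phenomenon for super-linear preferential attachment \cite{krapivsky2000connectivity}: a single ``gel'' node $v^*$ captures all but a constant-in-expectation number of edges, giving $\dvi{v^*}{t} = \Theta(t)$ and $W_t = \Theta(t^\alpha)$, while every other node keeps $\Oh{1}$ expected degree. The gel node then contributes $\dvi{v^*}{t_{v^*}}^\alpha n_{t_{v^*}}/W_{t_{v^*}} = \Oh{n_N}$, and any non-gel node $v$ contributes $\Oh{t_v \cdot t_v^{-\alpha}} = \Oh{t_v^{\,1-\alpha}} = \Oh{1}$ since $\alpha > 1$, for a grand total of $\Oh{N}$.

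The main obstacle is the super-linear case: the condensation results in the literature are typically asymptotic in $t$, whereas the master estimate requires uniform pre-asymptotic control on the ratio $n_t/W_t$ and a careful handling of expectations of $1/W_t$ (since the outer expectation sits over a ratio of random quantities). The hypothesis $n_0 = \Oh{1}$ is exactly what lets one bypass a potentially long transient regime in which a heavy-tailed seed graph could keep $n_t/W_t$ far from its ``gelled'' value and thereby inflate the sum; with a bounded seed, one can couple the process to one started from a near-empty graph and apply condensation bounds from its first step.
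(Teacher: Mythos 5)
Your ``charge every node at its last insertion'' decomposition is valid (the while-loop stops at the first count with $w(v)\le W/n$, so indeed $c_N(v)=c_{t_v}(v)\le\lceil \dvi{v}{t_v}^{\alpha}n_{t_v}/W_{t_v}\rceil$), and it is genuinely different from the paper, which instead charges each step $i$ with the increment $\lceil \dvi{h}{i}^{\alpha}n_i/W_i\rceil-\lceil \dvi{h}{j}^{\alpha}n_j/W_j\rceil$, where $j$ is the previous step in which the host $h$ was sampled. However, both of your case analyses have real gaps. For $\alpha\le 1$ you assert $W_t\le 2m_t=\Oh{n_0+t}$; this is false for dense seed graphs, since $m_t=m_0+t$ and the lemma allows $m_0=\Theta(n_0^2)$, so your chain $\sum_v \dvi{v}{t_v}^{\alpha}\,n_{t_v}/W_{t_v}\le \sum_v \dvi{v}{t_v}^{\alpha}\le W_N$ only yields $\Oh{n_0+m_0+N}$, weaker than the claimed $\Oh{n_0+N}$. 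The paper gets the seed-independent bound probabilistically: the expected number of steps between consecutive samplings of $h$ is at most $W_i/\dvi{h}{j}^{\alpha}$, which cancels the factor $\dvi{h}{j}^{\alpha}/W_i$ in the increment. Your route can in fact be repaired deterministically, but not by the step you wrote: use subadditivity to split $\dvi{v}{t_v}^{\alpha}\le \dvi{v}{0}^{\alpha}+(\dvi{v}{t_v}-\dvi{v}{0})$ (degree at creation for non-seed nodes), bound the first contribution by $n_N\sum_{v\in V_0}\dvi{v}{0}^{\alpha}/W_0=n_N$ using monotonicity of $W$, and the second by $\sum_v(\dvi{v}{t_v}-\dvi{v}{0})\le N$ together with $n_t\le W_t$.

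For $\alpha>1$ your argument is a sketch whose missing part is exactly the hard part, as you concede. You need $E\bigl[\dvi{v}{t_v}^{\alpha}\,n_{t_v}/W_{t_v}\bigr]=\Oh{1}$ for every non-gel node, which requires an $\alpha$-th moment bound on non-gel degrees, uniformly over the random time $t_v$ (itself positively correlated with the degree), together with control of the correlated factor $1/W_{t_v}$; ``expected degree $\Oh{1}$'' does not give $E[d^{\alpha}]=\Oh{1}$ for $\alpha>1$, and the cited condensation results are asymptotic statements, not per-node moment bounds. The paper avoids all of this by keeping the per-step accounting: from $\Delta_i=\Theta(n_i)$ it gets $W_i/n_i=\Omega(n_i^{\alpha-1})$, bounds the count increase of the sampled host by $\Oh{(\dvi{h}{i}^{\alpha}-\dvi{h}{j}^{\alpha})/n_i^{\alpha-1}}$, and finishes with $\dvi{h}{i}\le n_i+1$ and $(x+1)^{\alpha}-x^{\alpha}=\Oh{x^{\alpha-1}}$, so each step adds only $\Oh{1}$ entries; the only input from condensation is the single fact $\Delta=\Theta(n)$. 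To complete your proof you should either switch to this increment-per-step argument in the super-linear case or actually prove the moment bound; the proposed coupling to a near-empty seed is not needed for the paper's route and is not carried out in yours.
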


\begin{proof}
	The initial size of $P$ is at most
	\begin{equation}
		\nonumber
	 	|P_0| = \sum_{v \in V_0} \left\lceil \frac{\dvi{v}{0}^\alpha n_0}{W_0} \right\rceil \leq \sum_{v \in V_0} \left(1 + \frac{\dvi{v}{0}^\alpha n_0}{W_0} \right) = 2 n_0.
	\end{equation}
	Then, in each step $i$, the size of $P$ increases by $1$ for the new node and by the increase in the count of the host node $h$.
	The increase in the count is
	\begin{equation}
		\nonumber
		\max \left\{0, \left\lceil \frac{\dvi{h}{i}^\alpha n_i}{W_i} \right\rceil - \left\lceil \frac{\dvi{h}{j}^\alpha n_{j}}{W_{j}} \right\rceil \right \}
	\end{equation}
	where $j < i$ is the last step in which node $h$ was sampled.
	
	Now, we distinguish two cases.
	For $\alpha \leq 1$, we have $\dvi{h}{i}^\alpha = (\dvi{h}{j} + 1)^\alpha \leq \dvi{h}{j}^\alpha + 1$, and using $n_i / W_i \leq 1$ and $W_i > W_j$, we obtain
	\begin{align}
		\nonumber
		\left\lceil \frac{\dvi{h}{i}^\alpha n_i}{W_i} \right\rceil - \left\lceil \frac{\dvi{h}{j}^\alpha n_{j}}{W_{j}} \right\rceil < 2 + \frac{\dvi{h}{j}^\alpha}{W_i} \left(n_i - n_j\right).
	\end{align}
	Now, observe that $\dvi{h}{j}^\alpha / W_i$ is the probability that $h$ is sampled in step $i$, and as $W$ is non-decreasing, the probability that $h$ is sampled in each step between $j$ and $i$ is at least $\dvi{h}{j}^\alpha / W_i$.
	Thus, the expected length of a run until $h$ is sampled is at most $W_i / \dvi{h}{j}^\alpha$, and as by definition, $i - j$ is the length of this run, we have $E[i - j] = E[(n_0 + i) - (n_0 + j)] = E[n_i - n_j] \leq W_i / \dvi{h}{j}^\alpha$.
	Therefore, the expected increase is only constant, which shows the claim.
	
	The other case is $\alpha > 1$ and $n_0 = \Oh{1}$.
	In this case as a single node emerges which obtains almost all links and has expected degree $\Delta = \Theta(n)$ as $n \to \infty$ \cite{krapivsky2000connectivity}.
	In each further step $i$ we then have $W_i / n_i = \Omega(\Delta^\alpha / n_i) = \Omega(n_i^{\alpha-1})$ and the increase in the count of the chosen host node $h$ is
	\begin{align}
		\nonumber
		\left\lceil \frac{\dvi{h}{i}^\alpha n_i}{W_i} \right\rceil - \left\lceil \frac{\dvi{h}{j}^\alpha n_{j}}{W_{j}} \right\rceil &= \Oh{\frac{\dvi{h}{i}^\alpha - \dvi{h}{j}^\alpha}{n_i^{\alpha-1}}}.
	\end{align}
	In addition, we have $\dvi{h}{i} = \dvi{h}{j} + 1 < n_i + 1$, and it is straightforward to check that $(x + 1)^\alpha = x^\alpha + \Oh{x^{\alpha - 1}}$ for any $x > \alpha > 1$, so it follows that
	\begin{equation}
		\nonumber
		\frac{\dvi{h}{i}^\alpha - \dvi{h}{j}^\alpha}{n_i^{\alpha-1}} < \frac{(n_i + 1)^\alpha - n_i^\alpha}{n_i^{\alpha-1}} = \Oh{1}.
	\end{equation}
	This implies that in each further step, the increase in $|P|$ is only constant, and thus we have $|P| = \Oh{n}$.
\end{proof}

\noindent
We now show the main result of this section.

\begin{theorem}
	\label{th:seq-super-fast}
	Given a seed graph $G_0=(V_0, E_0)$ with $n_0 = |V_0|$ nodes, \nlpagen adds $N$ new nodes to $G_0$ in expected time $\Oh{n_0 + N}$.
\end{theorem}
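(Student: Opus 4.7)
The plan is to split the running time of \nlpagen into three components: (a) constructing the initial proposal list $P_0$, (b) per-step bookkeeping (updating $V$, $E$, and appending to $P$), and (c) the rejection \textbf{repeat}-loop summed across all $N$ steps. Components (a) and (b) are handled almost entirely by \cref{th:seq-linear-space}: the initialization runs in $\mathcal{O}(|P_0|) = \mathcal{O}(n_0)$ by the $|P_0| \leq 2 n_0$ bound already derived inside that proof, and the bookkeeping in each iteration costs $\mathcal{O}(1)$ plus $\mathcal{O}(1)$ per extra element appended inside the \textbf{while}-block, summing to $\mathcal{O}(N + E[|P_N|]) = \mathcal{O}(n_0 + N)$.

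The heart of the argument is (c). The proof of \cref{th:seq-super-cor} shows that the expected number of proposals issued in step $i$ equals $|P_{i-1}| \max_v w_{i-1}(v) / W_{i-1}$. I would establish, as the core invariant, that $\max_v w_i(v) = \mathcal{O}(W_i / n_i)$ holds throughout the execution. Combined with $|P_{i-1}| = \mathcal{O}(n_{i-1})$ from \cref{th:seq-linear-space}, this immediately gives $\mathcal{O}(1)$ expected proposals per step and hence a total expected rejection cost of $\mathcal{O}(N)$, completing the bound.

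To prove the invariant I would exploit that $w(v) = \dv{v}^\alpha / c(v)$ is \emph{frozen} in every step in which $v$ is not chosen as host, since then neither $\dv{v}$ nor $c(v)$ is touched. In the step $j$ where $v$ was last sampled, the \textbf{while}-loop enforces $w_j(v) \leq W_j / n_j$, and the rounded-up counts $c_0(v) = \lceil \dvi{v}{0}^\alpha n_0 / W_0 \rceil$ give the same bound at time $0$. Consequently $\max_v w_i(v) \leq \max_{j \leq i} W_j / n_j$, and the main obstacle is to argue $\max_{j \leq i} W_j / n_j = \mathcal{O}(W_i / n_i)$, i.e.\ that $W/n$ does not fluctuate by more than a constant factor along the trajectory. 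For $\alpha \leq 1$ this follows from $W_i = \Theta(n_i)$: each step contributes at least $1$ and at most $2$ to $W$ while $n$ grows by exactly $1$, so $W_j / n_j$ is trapped between two absolute constants. For $\alpha > 1$ with $n_0 = \mathcal{O}(1)$, I would reuse the condensation phenomenon already invoked in the second case of \cref{th:seq-linear-space}: a single hub of degree $\Theta(n)$ absorbs almost all of the weight, forcing $W_i / n_i = \Theta(n_i^{\alpha-1})$, a quantity that is eventually increasing and thus dominates any earlier $W_j / n_j$ up to a constant. In both regimes the invariant holds, yielding the claimed $\mathcal{O}(n_0 + N)$ expected running time.
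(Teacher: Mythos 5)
There is a genuine gap, and it sits exactly where the paper's proof does its real work. Your reduction hinges on the invariant $\max_v w_i(v) = \Oh{W_i/n_i}$, which you justify by claiming that $W_j/n_j$ stays within absolute constant factors along the whole run (for $\alpha \le 1$ via ``$W_i = \Theta(n_i)$''). That claim ignores the seed graph's contribution $W_0$, which the theorem does not restrict: for $\alpha \le 1$ you only get $W_0 \le 2m_0$, and $m_0$ may be $\Theta(n_0^2)$. Take a complete seed graph, so $W_0/n_0 = (n_0-1)^\alpha = \omega(1)$, and add $N \gg W_0$ nodes with $\ell = 1$; then $W_N/n_N = \Theta(1)$ while $\max_{j \le N} W_j/n_j = \Theta(n_0^\alpha)$, so the ratio you need to be $\Oh{1}$ is in fact unbounded. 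Since $w(v)$ is frozen between the times $v$ is chosen as host, a seed node that is not re-sampled for a long stretch can keep a weight far above the current $W/n$, and your bound $E[T] \le |P_{i-1}|\max_v w_{i-1}(v)/W_{i-1}$ no longer yields $\Oh{1}$. This is not a corner case: the paper motivates the dynamic proposal list precisely by the scenario of adding comparatively few nodes to a large seed graph whose degree distribution is far from the limiting one, which is exactly when $W/n$ drifts downward by more than a constant factor.

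The missing idea is the paper's treatment of this ``other case'' $W_j/n_j \ge w(v) > W_{i-1}/n_{i-1}$: rather than asserting stability of $W/n$, it shows that once the current threshold $W/n$ falls below a frozen weight $w(v)$, node $v$ is sampled in each subsequent step with probability at least $1/n$, so by a harmonic-sum calculation $v$ is expected to be re-sampled (and its weight re-capped by the while-loop) before $n$ grows by a factor $e$; hence $E[T]$ can only reach $\Oh{1}$ before the offending weight is corrected. Your decomposition into initialization, bookkeeping, and rejection cost, and your use of \cref{th:seq-linear-space} and the quantity from \cref{th:seq-super-cor}, all match the paper, and your observation that $\max_v w_i(v) \le \max_{j \le i} W_j/n_j$ is fine; what is absent is this re-sampling argument, which is the heart of the paper's proof and cannot be replaced by the constant-factor stability of $W/n$ for general seed graphs. (Your $\alpha>1$, $n_0=\Oh{1}$ case is closer to workable, since there $W/n = \Theta(n^{\alpha-1})$ is essentially increasing, but the $\alpha \le 1$ case as written does not go through.)
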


\begin{proof}
	The initialization in line 1 of Algorithm \ref{algo:nlpagen} takes at most time $\Oh{|P_0|} = \Oh{n_0}$ (see Lemma \ref{th:seq-linear-space}).
	The outer loop in lines $2-13$ terminates after $N$ steps.
	The first inner loop in lines $3-6$ terminates once a node is accepted.
	Reusing definitions of Thm.~\ref{th:seq-super-cor}, the expected number of proposals until a node is accepted in step~$i$ is
	\begin{equation}
		\nonumber
		E[T] = \frac{1}{1 - q} = \frac{W'_{i-1}}{W_{i-1}} = \frac{|P_{i-1}| \max_{v \in V_{i-1}} w_{i-1}(v)}{W_{i-1}}.
	\end{equation}

	Recall that a weight $w(v)$ can increase only if node $v$ is accepted, at which point, we add $v$ to $P$ until $w(v) \leq W / n$.
	Then we have $\max_{v \in V_{i-1}} w_{i-1}(v) \leq W_j / n_j$ where $j < i $ is the last step in which the node with the maximum $w(v)$ was accepted, and we obtain
	\begin{equation}
		\nonumber
		E[T] \leq \frac{W_j}{n_j} \frac{|P_{i-1}|}{W_{i-1}} = \Oh{ \frac{W_j}{n_{j}} \frac{n_{i-1}}{W_{i-1}} }
	\end{equation}
	where the last equality follows from the upper bound on $|P|$ given by Lemma \ref{th:seq-linear-space}.
	
	Now if $W_j / n_j \leq W_{i-1} / n_{i-1}$, then  $E[T] = \Oh{1}$ as desired.
	The other case is $W_j / n_j \geq w(v) > W_{i-1} / n_{i-1}$.
	In this case, we can show that $v$ is sampled again adjusting its weight before $E[T]$ grows too large.
	Note that since $W_{i-1} / n_{i-1} < w(v) \leq W_j / n_j$, there has to be some step $j < k \leq i - 1$ with $W_{k} / n_{k} < w(v) < W_{k-1} / n_{k-1}$.
	Since $w(v) = \dv{v}^{\alpha} / c(v) > W_{k} / n_{k}$ implies that $\dv{v}^{\alpha} > W_{k} / n_k$, the probability of sampling $v$ in step $k$ is at least $1 / n_k$.
	In addition, since $W / n$ has to decrease for $E[T]$ to increase, we still have $w(v) > W_{l} / n_{l}$ in some step $l > k$ with $n_{l} = C n_k$ for some $C > 1$.
	This implies that the probability of sampling $v$ in step $l$ is at least $1 / n_{l} = 1 / C n_{k}$.
	We now examine the expected number of times that $v$ is sampled between step $k$ and $l$ and find
	\begin{equation}
		\nonumber
		\frac{1}{n_k} + \frac{1}{n_{k + 1}} + \dots + \frac{1}{C n_{k}} = \mathbf{H}_{C n_{k}} - \mathbf{H}_{n_{k - 1}}
	\end{equation}
	where $\mathbf{H}_i$ denotes the $i$-th harmonic number.
	Using $\mathbf{H}_{C n_{k}} - \mathbf{H}_{n_{k - 1}} = \ln(C) + o(1)$, we find that for $C = e$ node $v$ is expected to be sampled once, and thus in expectation, $E[T]$ cannot grow larger than
	\begin{equation}
		\nonumber
		E[T] = \Oh{\frac{e n_{k}}{n_k}} = \Oh{1}.
	\end{equation}
	
	For the second inner loop in lines $10-12$, it suffices to use the bound on $|P|$ given by Lemma \ref{th:seq-linear-space}.
	As the bound gives $|P_N| = \Oh{n_0 + N}$ after $N$ steps, and each iteration of the loop increases the size of $P$ by $1$, the total time spent in this loop is $\Oh{n_0 + N}$.
\end{proof}

\section{Parallel Algorithm}
\label{sec:par-nonlinear-pa}
In this section, we describe an efficient parallelization of the \nlpagen algorithm given in \cref{subsec:seq-nlpagen}.
The parallel algorithm \pnlpagen builds on the following observation.

\begin{myobs}
	\label{obs:weight-increase-small}
	Let $G_0, \dots, G_N$ be a sequence of graphs generated with polynomial preferential attachment for some $\alpha \in \mathbb{R}_{> 0}$ as per \cref{def:pa}, and let $W_i = \sum_{v \in V_i} \dvi{v}{i}^\alpha$ denote the sum of the node weights after step $i$.
	Then, we have
	\begin{equation}
		\nonumber
		\frac{W_{i+1} - W_{i}}{W_{i+1}} = \begin{cases} \Oh{1 / n_i} &\quad \text{if } \alpha \leq 1. \\ \Oh{1 / \Delta_i} &\quad \text{if } \alpha > 1. \end{cases}
	\end{equation}
\end{myobs}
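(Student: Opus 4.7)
The plan is to make $W_{i+1}-W_i$ explicit by observing that, going from $G_i$ to $G_{i+1}$, only the newly introduced node and its chosen host see their degree change (I reduce to $\ell=1$ as in \cref{subsec:seq-nlpagen}; constant $\ell$ only affects the hidden constants). Writing $W_{i+1}-W_i = 1 + \bigl((d+1)^\alpha-d^\alpha\bigr)$ with $d:=\dvi{h}{i}$ the host's degree before step $i+1$, I would then compare this numerator to a lower bound on $W_{i+1}$ tailored to each regime. The split is forced by the fact that $x\mapsto x^\alpha$ switches from concave to convex at $\alpha=1$.

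For the subcritical regime $\alpha\le 1$, concavity makes $f(d):=(d+1)^\alpha-d^\alpha$ non-increasing in $d$, hence $f(d)\le f(0)=1$ and $W_{i+1}-W_i\le 2$. For the denominator I would use that every node has degree at least one, so $W_{i+1}\ge n_{i+1}\ge n_i$; dividing yields the claimed $\Oh{1/n_i}$ bound irrespective of which host was sampled.

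For $\alpha>1$ I would invoke convexity to bound the chord slope by the right-endpoint derivative: $(d+1)^\alpha-d^\alpha\le \alpha(d+1)^{\alpha-1}$. Since $d\le\Delta_i$ and $\Delta_{i+1}\le \Delta_i+1$, this is $\Oh{\Delta_i^{\alpha-1}}$, and the $+1$ contribution of the new node is absorbed because $\Delta_i^{\alpha-1}\ge 1$ when $\alpha>1$. Combined with the crude lower bound $W_{i+1}\ge \Delta_{i+1}^\alpha\ge \Delta_i^\alpha$, the ratio becomes $\Oh{\Delta_i^{\alpha-1}/\Delta_i^\alpha}=\Oh{1/\Delta_i}$.

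The only real subtlety is verifying the denominator bound in the subcritical case, which requires that no node is ever isolated so that each contributes at least $1^\alpha=1$ to $W$. This is built into \cref{def:pa}: the seed graph is fixed once, every newly arrived node brings $\ell\ge 1$ incident edges, and hosts are always selected among existing (hence positively-weighted) nodes. Beyond that, the argument is deterministic and purely algebraic, so no probabilistic reasoning is needed for this observation.
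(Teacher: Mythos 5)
Your argument is correct and matches the paper's (implicit) reasoning: the observation is stated without a standalone proof, but the same per-step bounds appear inside the proof of \cref{th:par-cor} --- an increase of at most $2$ per step for $\alpha\le 1$, and an increase dominated by the maximum-degree node, $(\Delta+1)^\alpha-\Delta^\alpha=\Oh{\Delta^{\alpha-1}}$, for $\alpha>1$ --- combined with the lower bounds $W\ge n$ (also used in \cref{th:seq-linear-space}) and $W\ge\Delta^\alpha$. Your closing caveat about isolated nodes is the same implicit assumption the paper makes ($W_i \ge n_i$ for $\alpha\le 1$), so nothing further is needed.
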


\noindent
\cref{obs:weight-increase-small} suggests that a sample drawn in step $i+1$ is independent from any changes to the distribution caused by step $i$ with a rather large probability.
Extending this principle to a batch of $\sqrt{n_i}$ or $\sqrt{\Delta_i}$ samples, we see that all samples in the batch are independent from changes to the distribution caused within the same batch with a non-vanishing probability.
Thus, we may draw all samples in a batch independently in parallel until the first dependent sample, which gives an efficient parallelization for $\nproc = \Theta(\sqrt{n})$ processors if $\alpha \leq 1$, and $\nproc = \Theta(\sqrt{\Delta_N})$ if $\alpha > 1$.

\subsection{ParPolyPA}
\label{subsec:par-nlpagen}

\begin{algorithm2e}[t]
	\KwData{Seed graph $G = (V, E)$, with $n_0 = |V|$, $m_0 = |E|$, requested nodes $N$, $\alpha > 0$}
	\ForPar(\hfill \CommentSty{// Init $P$}){$1 \leq p \leq \nproc$}{
		$P_p \gets [v_{\frac{(p - 1) n_0}{\nproc}}, \dots, \underbrace{v_i, \dots, v_i}_{\lceil \dvi{v_i}{0}^\alpha n_0 / W_0 \rceil}, \dots, v_{\frac{p n_0}{\nproc}}]$
	}
	$s \gets n_0$\;
	\While(\hfill \CommentSty{// Start new batch}){$s < N + 1$}{
		$l \gets N + 1$\;
		\ForPar{$1 \leq p \leq \nproc$}{
			$H \gets []$\tcp*{Phase 1}
			\For{$i \in [s + p, s + p + \nproc, \dots, l]$}{
				\textbf{with prob.} $\frac{W_i' - W_{s}}{W_{i}'}$, $l \gets \min \{i, l\}$\;
				Sample $h$ with $P_1(h) = \dvi{h}{s}^\alpha / W_s$\;
				$H \gets H + [(v_{n_0 + i}, h, i)]$\;
			}
			\textbf{barrier}\tcp*{Phase 2}
			\For{$(v, h, i) \in H$ \normalfont{where} $i < l$}{
				$V \gets V \cup \{v\}, E \gets E \cup \{\{v, h\}\}$\;
				$P_p \gets P_p + [v]$\;
				\While{$w(h) > W_s / n_s$}{
					$P_p \gets P_p + [h]$\;
				}
			}
			\textbf{barrier}\tcp*{Phase 3}
			\If{$p$\normalfont{ responsible}}{
				\textbf{with prob.} $\frac{W_{l} - W_s}{W_{l}' - W_s}$, Sample $h$ with $P_2(h) = (\dvi{h}{l}^\alpha - \dvi{h}{s}^\alpha)/ (W_l - W_s)$ \textbf{else} with $P_3(h) = \dvi{h}{l}^\alpha / W_l$\;
				$V \gets V \cup \{v_l\}, E \gets E \cup \{\{v_l, h\}\}$\;
				$P_p \gets P_p + [v_l]$\;
				\While{$w(h) > W_l / n_l$}{
					$P_p \gets P_p + [h]$\;
				}
			}
		}
		$s \gets l$\;
	}
	Return $G =(V, E)$\;
	\caption{\pnlpagen}
	\label{algo:pnlpagen}
\end{algorithm2e}

We now describe the parallel algorithm in detail, see also Algorithm \ref{algo:pnlpagen}.
First, let $\nproc$ denote the number of PUs (processors), and let $1 \leq p \leq \nproc$ denote the $p$-th PU.
Then, PU $p$ adds all new nodes at positions $n_0 + k \nproc + p$ and all new edges at positions $m_0 + k \nproc + p$ where $0 \leq k \leq N / \nproc$\footnote{For simplicity, we assume that $\nproc$ divides $n_0$ and $N$.}.
Before any samples are drawn, PU $p$ receives an $n_0 / \nproc$ share of the nodes in the seed graph, and initializes a proposal list $P_p$ for its nodes as described in \cref{subsec:seq-nlpagen}.

Next, all PUs enter the sampling stage.
Sampling is done in batches, where each batch ends if a dependent sample is found.
We indicate this event by an atomic variable $l$ that is initially set to $l \gets N + 1$, but may be updated with decreasing indices of dependent samples during the batch.
We also let $s$ denote the index of the first sample in each batch.
Each batch consists of three phases.
Note that the phases are synchronized, e.g. no PU may proceed to the next phase until all PUs have finished the current phase.

In the first phase, each PU draws its samples independently and stores them in a list of its hosts for this batch, but does not yet add any nodes or edges to the graph.
Before sample $i$, PU $p$ flips a biased coin that comes up heads with probability $W_{s} / W_{i}'$ where
\begin{equation}\label{eq:W-upper-bound}
	\nonumber
	W_i' = \begin{cases} W_s + 2 (i - s) & \text{if } \alpha \leq 1 \\ W_s + 2 \left({(\Delta_s + i - s)}^{\alpha} - {\Delta_s}^{\alpha}\right) & \text{if } \alpha > 1 \end{cases}
\end{equation}
gives an upper bound on $W_i$.
If the result is heads, then we know that the sample has to come from the old distribution, \ie node $h$ should be sampled with probability $P_1(h) = \dvi{h}{s}^\alpha / W_s$.
To this end, the PU draws two uniform random indices $r \in \{1, \dots, \nproc\}$ and $c \in \{1, \dots, \mathcal{S}\}$ where $\mathcal{S} = \max_p |P_p|$, and requests the $c$-th entry of list $P_r$ of PU $r$.
Note that it is possible for this request to fail if $c > |P_r|$, and in this case, two new indices are sampled.
Once a candidate node $h$ is found, $h$ is accepted or rejected as in the sequential case, and once a sample is accepted, it is added to the list of hosts.
If however, the result is tails, then the sample may have to come from new distribution, so the batch has to end before drawing this sample.
Therefore, the PU atomically checks variable $l$, and if $i < l$, updates the lower bound by setting $l \gets i$.
The PU terminates the first phase if $i \geq l$ or all nodes have been processed; otherwise it continues with its next node.

In the second phase, each PU adds its new nodes and sampled edges to the graph, and updates its list $P$ by adding its nodes, and any increase in the counts of its hosts caused by its edges.

In the third phase, the PU that found the first dependent sample $l$ draws this sample (if any).
Recall that we overestimated the probability that the sample was dependent by using the upper bound $W_l'$.
To correct for this, we first flip a coin that comes up heads with probability $(W_{l} - W_s)  / (W_{l}' - W_s)$.
If the result is heads, the responsible PU draws the sample only from the weight added in the batch, i.e. node $h$ is sampled with probability $P_2(h) = (\dvi{h}{l}^\alpha - \dvi{h}{s}^\alpha) / (W_l - W_s)$.
To this end, it selects a candidate node from one of the lists $P$, but only considers positions added during the batch.
Consequently, a candidate is accepted with probability proportional to the increase in its weight divided by the increase in its count.
Otherwise, if the result is tails, the sample is drawn from the new distribution with probability $P_3(h) = \dvi{h}{l}^\alpha / W_l$.
Once the PU sampled a host, it adds the node and edge to the graph and updates its list $P$.
Then, all PUs enter the next batch, or exit, if all $N$ samples have been drawn.

We now prove the correctness of the necessary modifications to obtain \pnlpagen from \nlpagen.

\begin{theorem}
	\label{th:par-cor}
	\pnlpagen samples host~$h \in V$ with probability $\dv{h}^\alpha / W$.
\end{theorem}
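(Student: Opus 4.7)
The plan is to adapt the argument of \cref{th:seq-super-cor} to the three-phase batched structure of \pnlpagen. The new ingredients to verify are that the coin flips correctly partition the sampling between phase~1 and phase~3, that within phase~3 the second coin splits between $P_2$ and $P_3$ in the right proportions, and that the resulting mixture matches the target $d(h)^\alpha/W$ at the graph state in which the host is actually drawn.

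First I would fix a batch starting at state $s$ and a position $i$ in the batch, conditioning on $i$ being processed in this batch (no earlier position $s+1,\dots,i-1$ had a tails coin). The two independent coins controlling position $i$ then partition the randomness into three disjoint outcomes: with probability $W_s/W_i'$ the phase-1 coin is heads and the sample is taken from $P_1(h) = d_s(h)^\alpha/W_s$; with probability $(W_i - W_s)/W_i'$ the phase-1 coin is tails and the phase-3 coin is heads, yielding a sample from $P_2(h) = (d_i(h)^\alpha - d_s(h)^\alpha)/(W_i - W_s)$; and with probability $(W_i' - W_i)/W_i'$ both coins are tails, yielding a sample from $P_3(h) = d_i(h)^\alpha/W_i$. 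The three probabilities sum to $1$ precisely because $W_i'$ upper-bounds $W_i$; here $d_i$ and $W_i$ denote the graph state after phase~2 folds in the phase-1 samples for positions $s+1,\dots,i-1$.

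The theorem then reduces to the algebraic identity
\[
\frac{W_s}{W_i'}\cdot\frac{d_s(h)^\alpha}{W_s} + \frac{W_i - W_s}{W_i'}\cdot\frac{d_i(h)^\alpha - d_s(h)^\alpha}{W_i - W_s} + \frac{W_i' - W_i}{W_i'}\cdot\frac{d_i(h)^\alpha}{W_i} = \frac{d_i(h)^\alpha}{W_i},
\]
which is a short computation: the first two summands telescope to $d_i(h)^\alpha/W_i'$, and adding the third collapses the bracket $1/W_i' + (W_i' - W_i)/(W_i'W_i)$ to $1/W_i$. The inner rejection step that turns a uniform draw from a proposal list $P_p$ into a sample from $P_1$, $P_2$, or $P_3$ reuses the argument of \cref{th:seq-super-cor} verbatim; for $P_2$ the candidate is simply restricted to the count-increments appended to $P_p$ during phase~2 of the current batch, which already carry weights proportional to $d_i(h)^\alpha - d_s(h)^\alpha$.

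The main obstacle is the parallel bookkeeping needed to realise the decomposition faithfully under CREW-PRAM semantics. I would check that (i) the min-conflict resolution leaves $l$ equal to the smallest position across all PUs whose phase-1 coin turned tails, so that the responsible PU in phase~3 is uniquely identified; (ii) the $i < l$ guard in phase~2 discards exactly those phase-1 samples at positions that never became the minimum, so that the state used in phase~3 matches the $d_i, W_i$ appearing in the decomposition and phase-1 samples downstream of $l$ do not contaminate later batches; and (iii) the phase-3 mixing coin is drawn with the same $W_s$, $W_l$, and $W_l'$ that appear in the identity, using the upper bound $W_l'$ defined just above the algorithm. Once these invariants are in place, applying the identity position-by-position yields the claim.
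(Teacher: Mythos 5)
Your core argument is the same as the paper's: split the draw at position $l$ into the three disjoint events (phase-1 coin heads; phase-1 tails and phase-3 heads; both tails) with probabilities $W_s/W_l'$, $(W_l - W_s)/W_l'$, and $(W_l' - W_l)/W_l'$, and collapse the resulting mixture of $P_1$, $P_2$, $P_3$ via the telescoping identity you state, which is verbatim the computation in the paper's proof of \cref{th:par-cor}. Your bookkeeping items (i)--(iii) go somewhat beyond what the paper spells out and are reasonable, and reusing the rejection argument of \cref{th:seq-super-cor} for the inner proposal-list step is exactly what is intended.

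The genuine gap is that you never verify that the \emph{concrete} formula the algorithm uses for $W_l'$ is in fact an upper bound on $W_l$; you invoke it as ``the upper bound defined just above the algorithm,'' but that upper-bound property is a claim about the growth of $W$ under the model, not a definition, and roughly half of the paper's proof is devoted to establishing it. Without it your decomposition is not even well formed: the mixture weights $(W_l - W_s)/W_l'$ and $(W_l' - W_l)/W_l'$, and in particular the phase-3 mixing coin $(W_l - W_s)/(W_l' - W_s)$, need not lie in $[0,1]$. The missing step is a case analysis on $\alpha$. For $\alpha \le 1$, each of the $l - s$ steps increases $W$ by $1$ for the new node plus $(\dv{h}+1)^\alpha - \dv{h}^\alpha \le 1$ for the host, hence $W_l \le W_s + 2(l-s) = W_l'$. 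For $\alpha > 1$, the host's increment $(\dv{h}+1)^\alpha - \dv{h}^\alpha$ is maximized when $\dv{h} = \Delta$, so $W_l \le W_s + (l-s) + (\Delta_s + l - s)^\alpha - \Delta_s^\alpha \le W_s + 2\bigl((\Delta_s + l - s)^\alpha - \Delta_s^\alpha\bigr) = W_l'$. Adding this (or an equivalent) argument completes your proof; everything else matches the paper's route.
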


\begin{proof}
	We distinguish three cases to sample node $h$.
	Either (1) $h$ is sampled as independent sample with probability $P_1(h) = \dvi{h}{s}^\alpha / W_s$, or (2) as dependent sample with probability $P_2(h) = (\dvi{h}{l}^\alpha - \dvi{h}{s}^\alpha) / (W_l - W_s)$, or (3) as sample from the new distribution with probability $P_3(h) = \dvi{h}{l}^\alpha / W_l$.
	Thus, the overall probability of sampling $h$ is given by
	\begin{align}
		\nonumber
		P(h) &= \frac{W_s}{W_l'} P_1(h) + \frac{W_l - W_s}{W_l'} P_2(h) + \frac{W_l' - W_l}{W_l'} P_3(h)
		\\
		\nonumber
		&= \frac{\dvi{h}{s}^\alpha}{W_l'} + \frac{\dvi{h}{l}^\alpha - \dvi{h}{s}^\alpha}{W_l'} + \frac{W_l' - W_l}{W_l'} \frac{\dvi{h}{l}^\alpha}{W_l}
		\\
		\nonumber
		&= \frac{\dvi{h}{l}^\alpha}{W_l'} + \frac{W_l' - W_l}{W_l'} \frac{\dvi{h}{l}^\alpha}{W_l}
		\\
		\nonumber
		&= \frac{\dvi{h}{l}^\alpha}{W_l}.
	\end{align}
	It only remains to show that $W_l'$ is an upper bound on $W_l$.
	We first consider the case where $\alpha \leq 1$.
	Observe that in each step, $W_s$ increases by $1$ for the new node and by $(\dv{h} + 1)^\alpha - \dv{h}^\alpha \leq 1$ for the host $h$.
	Thus, the overall increase after $l - s$ steps is at most $2 (l - s)$, and $W_l \leq W_s + 2 (l - s) = W_l'$ as claimed.
	In the other case, where $\alpha > 1$, $W_s$ similarly increases by $1$ for the new node and by $(\dv{h} + 1)^\alpha - \dv{h}^\alpha$ for the host $h$.
	In addition, it is easy to verify that $(\dv{h} + 1)^\alpha - \dv{h}^\alpha$ is maximized if $\dv{h} = \Delta$.
	Thus, we have
	\begin{align}
		\nonumber
		W_l &\leq W_s + l - s + (\Delta_s + l - s)^\alpha - \Delta_s^\alpha 
		\\
		\nonumber
		&\leq W_s + 2 ((\Delta_s + l - s)^\alpha - \Delta_s^\alpha) 
		\\
		\nonumber
		&= W_l'
	\end{align}
	as claimed.
\end{proof}

The following theorem shows that \pnlpagen yields a near-optimal speed-up if $\alpha \leq 1$ or $\alpha > 1$ and $N$ is large.

\begin{theorem}
\label{th:par-fast}
	Given a seed graph $G_0=(V_0, E_0)$ with $n_0 = |V_0|$, \pnlpagen adds $N$ new nodes to $G_0$ in expected time $\Oh{(\sqrt{N} + N / \nproc) \log \nproc}$ if $\alpha \leq 1$ or $\alpha > 1$ and $N = \omega(n_0)$.
\end{theorem}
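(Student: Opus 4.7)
The plan is to split the running time of \pnlpagen into three parts: the parallel initialization of the proposal lists, the total work performed inside all batches, and the synchronization overhead paid once per batch. The sequential analyses of \cref{th:seq-linear-space,th:seq-super-fast} control the first two parts (with work distributed among the $\nproc$ PUs), so the central tasks are to upper-bound the total number of batches by $\Oh{\sqrt{N}}$ and to charge only $\Oh{\log \nproc}$ per batch for the barriers and atomic updates.

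To bound the number of batches I would show that the expected length of a batch starting at step $s$ is $\Omega(\sqrt{n_s})$. By construction, the coin flipped before sample $i$ comes up tails with probability $(W_i' - W_s)/W_i'$, and the definition of $W_i'$ together with \cref{obs:weight-increase-small} bounds this by $\Oh{(i-s)/n_s}$ for $\alpha \leq 1$ and by $\Oh{(i-s)/\Delta_s}$ for $\alpha > 1$. The probability that the first $k$ samples of a batch are all independent is therefore at least $\prod_{j=1}^{k}(1 - \Oh{j/n_s}) = \Omega(1)$ whenever $k = \Oh{\sqrt{n_s}}$, which yields the claimed expected batch length. For $\alpha > 1$ with $N = \omega(n_0)$, the condensation argument used in the proof of \cref{th:seq-linear-space} gives $\Delta_s = \Theta(n_s)$ once sufficiently many nodes have been added, so the same $\Omega(\sqrt{n_s})$ lower bound applies. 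Telescoping $s_{k+1}-s_k = \Omega(\sqrt{s_k})$ across consecutive batches then produces $\Oh{\sqrt{N}}$ batches in total.

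For the work inside each batch I plan to reuse the sequential analysis: since \cref{th:par-cor} confirms that the proposal and acceptance distributions in phases~1 and~3 are correct and the invariant $w(v) \leq W_s/n_s$ is preserved between batches, the expected number of proposals per accepted sample remains $\Oh{1}$ and the total expected work across all $N$ samples is $\Oh{n_0 + N}$, exactly as in \cref{th:seq-super-fast}. Each primitive operation pays an additional $\Oh{\log \nproc}$ factor because updates to the shared node and edge arrays, to the proposal-list slots, and to the atomic bound variable $l$ may conflict and must be simulated in the CREW-PRAM model. Dividing this work among $\nproc$ PUs and combining it with the $\Oh{\sqrt{N} \log \nproc}$ overhead contributed by the $\Oh{\sqrt{N}}$ barriers and phase-3 iterations yields the claimed $\Oh{(\sqrt{N} + N/\nproc) \log \nproc}$ bound; the $n_0$ term is absorbed because the parallel initialization runs in $\Oh{(n_0/\nproc)\log \nproc}$, which is dominated by $\Oh{(N/\nproc)\log \nproc}$ whenever $n_0 = \Oh{N}$.

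The hard part will be making the lower bound on batch length fully rigorous in the $\alpha > 1$ regime, because \cref{obs:weight-increase-small} only controls the relative weight increase in terms of $\Delta_s$, and translating this bound into $\Omega(\sqrt{n_s})$ relies on the condensation property $\Delta_s = \Theta(n_s)$ that only kicks in once the graph has grown well past the seed, which is precisely why the hypothesis $N = \omega(n_0)$ is needed. Technical care will also be required to handle the stochastic dependence between successive batch lengths, which I plan to address by conditioning on the execution up to the start of each batch and applying linearity of expectation across the $\Oh{\sqrt{N}}$ batches.
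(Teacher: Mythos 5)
Your proposal is correct and takes essentially the same route as the paper: bound the initialization by $\Oh{(n_0/\nproc)\log\nproc}$, show each batch of length $R$ costs $\Oh{(1+R/\nproc)\log\nproc}$ with conflicts resolved via minimum/summation in $\Oh{\log\nproc}$, and use \cref{obs:weight-increase-small} (plus $\Delta = \Theta(n)$ when $\alpha>1$ and $N = \omega(n_0)$) to get expected batch length $\Theta(\sqrt{n_i})$ and hence $\Oh{\sqrt{N}}$ batches. In fact you spell out the batch-length bound (the product of independence probabilities and the telescoping over batches) and the dependence caveat in more detail than the paper's own terse argument does.
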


\begin{proof}
	Computing $W_0$ and initializing the lists $P_1, \dots, P_\nproc$ in parallel takes time $\Oh{n_0 / \nproc \log(\nproc)}$.

	In the following, we show that the remainder of \pnlpagen can be implemented to process a batch of length~$R$ in time $\Oh{(1 + R/\nproc) \log \nproc}$.
	By \cref{obs:weight-increase-small} and $\Delta_N = \Theta(n)$ if $N = \omega(n_0)$, we expect $E[R] = \Theta(\sqrt{n_i})$ resulting in $\Oh{\sqrt{N}}$ batches which, in combination, establishes the claim.
	
	In the first phase, all samples are drawn independently and affect only each PU's local state.
	The only concurrent writes happen to the global variable $l$ for which we use \emph{minimum} as conflict resolution.
	Thus, the first phase can be executed in time $\Oh{(1 + R/\nproc) \log \nproc}$.
	
	The runtime of the second phase is dominated by the concurrent writes to the shared data structuring maintaining the counts $c(\cdot)$ of the proposal list.
	Parallel writes to the same counter are summed up.
	There at most $\Theta(R)$ such updates, accounting for $\Oh{(1 + R/\nproc) \log \nproc}$ time per batch.
	
	Finally, in the third phase, only one PU draws one sample which takes constant time.
\end{proof}

\section{I/O-Efficient Algorithm}
\label{sec:external-generalized-pa}
\newcommand{\HostReq}[3]{\textsf{HostReq}$\langle #1, #2, #3 \rangle$}
\newcommand{\ExMsg}[3]{\textsf{ExMsg}$\langle #1, #2, #3 \rangle$}
\newcommand{\PQU}{\ensuremath{\textsf{PQ}_U}\xspace}
\newcommand{\PQM}{\ensuremath{\textsf{PQ}_M}\xspace}

In this section, we extend the I/O-efficient algorithm of \cite{DBLP:conf/alenex/MeyerP16} to the general case.
The algorithm transfers the main idea of \cite{batagelj2005efficient} to the external memory setting.
Rather than reading from random positions of the edge list, it emulates the same process by precomputing all necessary read operations and sorts them by the memory address they are read from.
As the algorithm produces the edge list monotonously moving from beginning to end, it scans through the sorted read requests and forwards still cached values to the corresponding target positions using an I/O-efficient priority-queue.

In order to extend the algorithm to the general case, we split the sampling of hosts into a two-step process, see also Algorithm \ref{algo:extgeneralizedpa}.
First, for each new node $v_{n_0 + i}$ we only sample the degrees of the $\ell$ different hosts and collectively save this information for the second phase, \eg~that node $v_{n_0 + i}$ requested a host with degree $d$.
By postponing the actual sampling of the hosts, we can bulk all nodes with the same degree and therefore distribute them I/O-efficiently to the incoming nodes.

\begin{algorithm2e}[t]
	\KwData{Seed graph $G = (V, E)$, with $n_0 = |V|$, $m_0 = |E|$, requested nodes $N$}
	\ForEach(\hfill \CommentSty{// Init}){$v \in V$}{
		\PQM.push(\ExMsg{\deg_{0}(v)}{0}{v}) \\
		$c_{\deg_{0}(v)} \leftarrow c_{\deg_{0}(v)} + 1$
	}
	\ForEach(\hfill \CommentSty{// Phase 1}){$i \in [1, \ldots, N]$}{
		\PQM.push(\ExMsg{\ell}{\ell(i+1)}{v_{n_0 + i}}) \\
		\ForEach{$j \in [1, \ldots, \ell]$}{
			\Repeat{accepted}{
				Sample degree $d := \deg(h)$ of host $h$ proportionally to $c_df(d)$ \\
				Accept with prob.~$(c_{d} {-} s_{d}) / c_{d}$
			}
			$s_{\deg(h)} \leftarrow s_{\deg(h)} + 1$ \\
			Store \HostReq{v_{n_0+i}}{j}{\deg(h)}
		}
		Update counters $c_{d}$ with changes $s_{d}$
	}
	Sort requests ascendingly by (degree, node)\\
	\For(\hfill \CommentSty{// Phase 2}){\normalfont increasing degree $d$}{
		$R_{\min} \leftarrow 0, R_{\max} \leftarrow 1$ \\
		\ForEach{\normalfont \HostReq{v_{n_0 + i}}{j}{d_j} with $d_j{=}d$}{
			$t \leftarrow \ell i + j$ \\
			\While{\normalfont \ExMsg{d'}{t'}{v} $\leftarrow \PQM$.top() where $d' = d$ and $t' < t$}{
				$R_v \leftarrow$ uniformly from $[R_{\min}, R_{\max}]$ \\
				\PQU.push($R_v, v$) \\
				\PQM.pop()
			}
			$(R_u, u) \leftarrow \PQU$.pop()\\
			$R_{\min} \leftarrow R_u$\\
			$E \leftarrow E \cup \{ \{v_{n_0 + i}, u \} \}$
		}
		Empty \PQU
	}
	\caption{\textbf{\emgenpa}}
	\label{algo:extgeneralizedpa}
\end{algorithm2e}

As the first phase only samples node degrees, it suffices to group nodes with the same degree $d$ and represent each group by their counter $c_d$ and weight $c_d \cdot f(d)$.
Therefore, initially each seed node $v \in V_0$ contributes weight $f(\deg_0(v))$ to the overall weight of its corresponding group.
A host request for degree $d$ is then sampled proportionally to $c_d \cdot f(d)$.
To reflect the actual generation process, we remember the number of sampled hosts $s_{d}$ from degree $d$ arising from a new node $v_{n_0+i}$.
This is necessary, as any subsequent host request needs to seek a different node to faithfully represent the model.
By adding rejection sampling we correct the probability distribution a posteriori, \ie~if degree $d$ is sampled we finally accept with probability $(c_d - s_d)/c_d$ and restart otherwise.
After generating all $\ell$ host requests we update the corresponding counters $(c_d, c_{d+1})$ to $(c_d - s_d, c_{d+1} + s_d)$ for at most $\ell$ degrees.
While sampling these requests we generate tuples \HostReq{v_{n_0 + i}}{j}{d_j} for $j \in \{1, \ldots, \ell\}$ representing that node $v_{n_0 + i}$ requested as $j$-th host a node with degree $d_j$.

After collecting all host requests, we sort all requests by host degree first and new node second.
Subsequently in the second phase, we fulfill each request for a host of degree $d$ by uniformly sampling from the set of existing nodes with degree $d$ at that time using two I/O-efficient priority-queues \PQU and \PQM employing standard external memory techniques \cite{DBLPedit:conf/dagstuhl/MaheshwariZ02}.
While \PQU is simply used as a means to retrieve uniform samples of its currently held messages, \PQM is used to gather all matching nodes.
To initialize, given the seed graph $G_0$ we insert for each node $v \in V_0$ a message \ExMsg{\deg_0(v)}{0}{v} into \PQM reflecting the information that at time $t = 0$ node $v$ has degree $\deg_0(v)$ in $G_0$.
Similarly, we insert messages \ExMsg{\ell}{\ell (i + 1)}{v_{n_0 + i}} into \PQM for all $i \in \{1, \ldots, N\}$, hinting that at time $\ell (i + 1)$, after its addition, node $v_{n_0 + i}$ indeed has $\ell$ neighbors.
We then process requests ascendingly by degree.

When processing host request \HostReq{v_{n_0 + i}}{j}{d_j} we compute $t = \ell i + j$ and push all vertices of messages from \PQM with degree $d_j$ and time $t' < t$ into \PQU.
More concretely, when processing all requests with degree $d$ it is necessary to keep two values $R_{\min}$ and $R_{\max}$ and insert nodes into \PQU with a weight drawn uniformly at random from $[R_{\min}, R_{\max}]$.
After all suitable nodes have been inserted into \PQU, we pop the node with smallest weight $R$ and connect $v_{n_0 + i}$ to it\footnote{Due to interchangeability, all inserted nodes have equal probability to have minimum weight and are thus sampled uniformly at random.}.
Now that all remaining nodes in \PQU have a weight of at least $R$, we update $R_{\min} \leftarrow R$ to preserve uniformity for any following node.
Essentially, for a request targeted at time $t$ and degree $d$ we provide all nodes that exist as nodes with degree $d$ up to time $t$ and uniformly sample from them.
After fulfilling the request with node $u$, we forward $u$ as a potential partner to requests for hosts with degree $d + 1$ by adding a message \ExMsg{d + 1}{\ell(i + 1)}{u} into \PQM.
All remaining unmatched nodes in \PQU will stay unmatched, hence \PQU is simply emptied and the algorithm proceeds to requests for the next larger degree.

By lazily resolving the actual hosts in the second step and only sampling the degrees in the first, the algorithm only needs to keep the set of currently existing degree groups in internal memory, \ie~the respective degree and its multiplicity in the current graph.
Note that a graph represented by its set of unique degrees $D$ has $\Omega(|D|^2)$ edges which even under pessimistic assumptions amounts to an output graph with more than 1 PB of size \cite{DBLPedit:journals/jea/HamannMPTW18}.

\begin{lemma}
	Processing the host requests sorted ascendingly by degree first and new node second correctly retains the output distribution.
\end{lemma}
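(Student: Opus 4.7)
The plan is to factor the correctness of the second phase into two parts: (i) the multiset of (new-node, host-degree) requests produced by Phase~1 has the correct joint distribution, which follows directly from the weighted-sampling-with-rejection step described for Phase~1 and is not what this lemma is about; and (ii) conditional on Phase~1's output, Phase~2 produces a uniformly random assignment of concrete host identities that is consistent with the time-and-degree constraints. Because, in the model of \cref{def:pa} with $P(h) \propto f(\dv{h})$, two nodes that carry the same degree at the same time step are exchangeable, any such uniform assignment yields the correct joint distribution over the final graph. Thus the lemma reduces to showing that Phase~2, scanning requests lexicographically by degree and then by new node, realizes such a uniform assignment.

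I would first argue that processing requests by increasing degree causes no loss of information. A node's degree only ever grows, and it grows from $d$ to $d+1$ precisely by being matched to a request \HostReq{\cdot}{\cdot}{d}. Together with the Phase~1 counters $s_d$ and the seed graph, this already determines, before Phase~2 begins, exactly how many nodes reach each degree $d$ and the time step at which each such transition must occur, independently of which specific identities are chosen. Consequently, when Phase~2 finishes all degree-$d$ requests and advances to degree $d+1$, the combinatorial structure of which nodes become available at degree $d+1$, and at what time, is already fixed; the \ExMsg{d+1}{\ell(i+1)}{u} re-insertion into \PQM simply records that matched node $u$ now belongs to the degree-$(d+1)$ pool starting at time $\ell(i+1)$.

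Next I would prove by induction over the requests of a fixed degree $d$ that each pop from \PQU returns a uniformly random element of the set of currently unmatched nodes that reached degree $d$ at some time $t' < t$, where $t = \ell i + j$ is the target time of the current request. The invariant to carry along the induction is that, at the moment the request is processed, every live candidate has an i.i.d.\ uniform-$[R_{\min}, R_{\max}]$ weight stored in \PQU. Newly arriving candidates, namely those \PQM-messages with matching degree and $t' < t$, are drawn fresh weights from $[R_{\min}, R_{\max}]$ and therefore preserve the invariant. The minimum of i.i.d.\ uniforms is a uniform sample of the candidate set, and updating $R_{\min}$ to the popped weight restores the invariant on the survivors via the order-statistic identity that, conditional on $\min_i X_i = r$ with $X_i$ i.i.d.\ uniform on $[a,b]$, the remaining $X_i$ are i.i.d.\ uniform on $[r,b]$.

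The main obstacle is to nail down the two technical points cleanly. On one hand, one must verify the uniform-conditional-on-minimum identity so that newly pushed candidates and surviving candidates can be merged symmetrically after each pop. On the other, one must check temporal consistency of the degree-first traversal: whenever a request with target time $t$ for degree $d$ is handled, every node that has degree $d$ at some earlier time is already present in \PQM (either from the seed-graph initialization, from its insertion as a new node of initial degree $\ell$, or from the forwarding step after being matched at degree $d-1$), while no node whose earliest degree-$d$ time is $\geq t$ has been pushed into \PQU. Once both are settled, combining them with the exchangeability argument of the first paragraph establishes that Phase~2 outputs a host identity whose conditional distribution given the Phase~1 requests is uniform on the correct set, which is exactly what is needed to preserve the overall output distribution.
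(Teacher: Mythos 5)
Your proof is correct, but it follows a genuinely different route from the paper's. The paper treats the lemma purely as a reordering statement: a request issued later can depend on an earlier one only if its degree is at least as large (a node matched at degree $d$ is only ever forwarded to degree $d+1$), so the requests form a dependency DAG, and processing them in any topological order --- in particular sorted ascendingly by degree first and time second --- is equivalent to processing them in time order; the uniformity of the priority-queue sampling via the $[R_{\min}, R_{\max}]$ weights is delegated to a footnote (``interchangeability''). You instead argue that (i) the Phase-1 output alone fixes the size of every degree pool and all transition times, so by exchangeability any uniform consistent assignment of identities has the correct law, and (ii) the degree-first traversal realizes such a uniform assignment, which you verify with an explicit order-statistics invariant (conditional on the minimum of i.i.d.\ uniforms being $r$, the survivors are i.i.d.\ uniform on $[r, R_{\max}]$). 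Your route has the merit of actually proving correctness of the $R_{\min}$-update trick, which the paper only asserts, and of making explicit why uniformity over consistent assignments suffices given the Phase-1 degrees. Conversely, the ``temporal consistency'' point you defer as an obstacle at the end is exactly where the paper's one-line DAG/topological-order observation does the work: because all degree-$(d-1)$ requests are completed before any degree-$d$ request, every node that ever attains degree $d$ before time $t$ has already deposited its degree-$d$ message in \PQM, and the filter $t' < t$ admits precisely the set of nodes having degree $d$ just before time $t$; that step is the actual content of the lemma and should be promoted from a remaining check to the core of your argument rather than left as a loose end.
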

\begin{proof}
	Let $M(d, t)$ be the set of nodes with degree $d$ after time $t$ where $M(d, 0)$ is given by the nodes of the seed graph $G_0$ with degree $d$.
	Processing a host request \HostReq{v_{n_0 + i}}{j}{d} of node $v$ for degree $d$ at time $t = \ell i + j$ uniformly samples a node $u$ of $M(d, t - 1)$ and forwards it to the next degree group reflecting the equalities $M(d, t) = M(d, t - 1) \setminus \{ u \}$ and $M(d + 1, t) = M(d + 1, t - 1) \cup \{ u \}$. 
	
	Therefore, when considering two requests \HostReq{v_{n_0 + a}}{j}{d} and \HostReq{v_{n_0 + b}}{j'}{d'} where the first is generated before the second, it is clear that the latter can only depend on the former when $d \le d'$, implying a DAG of dependencies of the degree requests.
	In particular, it is only necessary to consider direct dependencies, see \cref{fig:degree-requests-dependencies} for an example.
	
	Processing the requests is correct as long as it conforms to the DAG of dependencies, \ie~is equivalent to any topological ordering of the requests.
	Naturally, processing requests ascendingly by time, is therefore correct.
	However, the order given by ascendingly sorting by degree first and time second also corresponds to a topological ordering proving the claim.
	
	In \cref{fig:degree-requests-dependencies} both approaches are easily visualized.
	Processing by time corresponds to fulfilling the requests from left to right while processing by degree first and time second corresponds to fulfilling the requests from top to bottom, where early requests are prioritized if the degree is matching.
\end{proof}

\begin{theorem}
	Given a seed graph $G_0 = (V_0, E_0)$ with $n_0 = |V_0|$, \emgenpa adds $m$ new edges to $G_0$ using $\Oh{\sort{n_0 + m}}$ I/Os if the maximum number of unique degrees fits into internal memory.
\end{theorem}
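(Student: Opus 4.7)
The plan is to split the I/O accounting into three parts that correspond to the three stages of \emgenpa and show each contributes $\Oh{\sort{n_0 + m}}$ I/Os: (i) the initialization and Phase 1, (ii) the intermediate sorting of \HostReq tuples, and (iii) Phase 2 with its two external priority queues.

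For (i), I would first invoke the memory assumption: since the current set of unique degrees (together with their multiplicities $c_d$ and the transient per-request counters $s_d$) fits in internal memory, we can back the weighted distribution over degrees by the data structure of \cite{DBLP:journals/mst/MatiasVN03} used internally, so every sample and every acceptance/rejection test in lines 6--9 costs zero I/Os. The only external activity in Phase 1 is then (a) pushing the $n_0$ initial \ExMsg messages for seed nodes and the $N$ \ExMsg messages for new nodes into \PQM, for a total of $\Oh{\sort{n_0 + N}}$ I/Os by the Arge bound \cite{DBLP:journals/algorithmica/Arge03}, and (b) streaming the $m$ resulting \HostReq tuples to disk, which costs $\Oh{\scan{m}}$ I/Os. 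Part (ii) is just a single external sort, for $\Oh{\sort{m}}$ I/Os.

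For (iii), the argument reduces to bounding the total number of operations issued to \PQM and \PQU. The number of \PQM pushes is exactly $n_0 + N + m$: $n_0$ from the seed graph, $N$ from the new-node messages of Phase 1, and one more push per fulfilled host request (the forwarded \ExMsg{d+1}{\cdot}{u}). Each \PQM pop removes a previously pushed message, so the number of \PQM pops is also $\Oh{n_0 + m}$. Every \PQU push is produced by a \PQM pop inside the inner while-loop, and \PQU is emptied at the end of each degree group, so the total number of \PQU operations is bounded by the total number of \PQM operations, i.e.\ $\Oh{n_0 + m}$. Scanning the sorted requests in order adds a further $\Oh{\scan{m}}$. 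Applying the $\Oh{\sort{k}}$ bound for $k$ priority-queue operations to each queue and summing all contributions yields $\Oh{\sort{n_0 + m}}$ I/Os in total.

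The main obstacle is this last bookkeeping step: one has to argue carefully that a node forwarded through degree levels does not generate a number of \PQM/\PQU operations proportional to its \emph{final} degree but only to the number of times its degree is actually incremented in the run. This is exactly what the algorithm enforces, since (a) a node is placed into \PQM only upon a matching, and (b) inside the while-loop each \PQM message is touched at most once before being popped and forwarded to \PQU. Consequently, the amortized number of priority-queue operations per new edge is $\Oh{1}$, and the total cost across all edges and across the $n_0$ initial seed-node messages stays within the $\Oh{\sort{n_0 + m}}$ budget.
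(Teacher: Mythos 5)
Your proposal is correct and follows essentially the same route as the paper: charge Phase~1 to a scan (sampling is free I/O-wise since the degree groups stay internal), one external sort of the requests, and bound the total number of priority-queue messages in Phase~2 by $\Oh{n_0 + m}$ (each node is inserted once and forwarded only when its degree actually increases), yielding $\Oh{\sort{n_0+m}}$ I/Os via the standard external priority-queue bound. Your explicit message count for \PQM and \PQU is just a more detailed phrasing of the paper's observation that each node is reinserted at most $\Oh{\deg_N(v)-\deg_0(v)}$ times.
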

\begin{proof}
	The first phase produces $\Theta(m)$ many host requests incurring $\Theta(\scan{m})$ I/Os where the sampling can be done efficiently using a dynamic decision tree.
	Sorting the requests then takes $\Oh{\sort{m}}$ I/Os.
	
	In the second phase, each request is read in ascending fashion requiring a single scan, incurring $\Theta(\scan{m})$ I/Os.
	Fulfilling the host requests is done iteratively for increasing target degree.
	If a node is matched with a request for degree $d$, it is forwarded in time to requests for degree $d+1$ or cleared otherwise.
	Thus, each node $v$ in the output graph is inserted at least once and reinserted at most $\Oh{\deg_N(v) - \deg_0(v)}$ times into the priority-queues.
	Hence, in total $\Oh{n_0 + m}$ messages are produced incurring $\Oh{\sort{n_0 + m}}$ I/Os when implementing the priority-queues using Buffer Trees as the underlying data structure~\cite{DBLP:journals/algorithmica/Arge03}.
\end{proof}

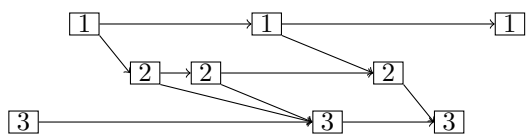
\begin{figure}[t]
	\centering
	\begin{tikzpicture}[
	block/.style = {
		draw, 
		fill=white, 
		rectangle, 
		minimum width=width("12") + 1pt,
		inner sep=1pt
	},
	bbedge/.style = {
		->,
		draw
	}]
	\def\x{.8};
	\def\y{.65};
	\node[block] (t0) at (0*\x, 0*\y) {3};
	\node[block] (t8) at (8*\x, 2*\y) {1};
	
	\path [use as bounding box] (t0.south west) -| (t8.north east) -| (t0.south west);
	
	\node[block] (t1) at (1*\x, 2*\y) {1};
	\node[block] (t2) at (2*\x, 1*\y) {2};
	\node[block] (t3) at (3*\x, 1*\y) {2};
	\node[block] (t4) at (4*\x, 2*\y) {1};
	\node[block] (t5) at (5*\x, 0*\y) {3};
	\node[block] (t6) at (6*\x, 1*\y) {2};
	\node[block] (t7) at (7*\x, 0*\y) {3};
		
	\path[bbedge] (t0.east) -- (t5.west);
	
	\path[bbedge] (t1.south east) -- (t2.west);
	\path[bbedge] (t1.east) -- (t4.west);
	
	\path[bbedge] (t2.east) -- (t3.west);
	\path[bbedge] (t2.south east) -- (t5.west);
	
	\path[bbedge] (t3.south east) -- (t5.west);
	\path[bbedge] (t3.east) -- (t6.west);
	
	\path[bbedge] (t4.east) -- (t8.west);
	\path[bbedge] (t4.south east) -- (t6.west);
	
	\path[bbedge] (t5.east) -- (t7.west);
	
	\path[bbedge] (t6.south east) -- (t7.west);
\end{tikzpicture}
	\caption{
		Dependency graph of direct dependencies where the degrees $(3, 1, 2, 2, 1, 3, 2, 3, 1)$ are requested in order from left to right.
	}
	\label{fig:degree-requests-dependencies}
\end{figure}

\section{Implementations}
\label{sec:implementations}
\subsection{Implementation for Internal Memory}
\label{subsec:impl-internal-memory}
We implement \nlpagen, \pnlpagen, and \dyngen (see below) in the programming language  \textsc{Rust}\footnote{\url{https://rust-lang.org}; performance roughly on par with C.} and almost exclusively use the \textsc{safe} language subset and avoid hardware-specific features.

For comparison with the state of the art, we consider the generator \dyngen which relies on the first algorithm proposed in \cite{DBLP:journals/mst/MatiasVN03}.
This data structure samples weighted items from a universe of size $N$ in expected time $\Oh{\log^* N}$ and supports updates in time $\Oh{2^{\log^* N}}$.
While the authors propose further asymptotical improvements, preliminary experiments suggest that these translate into slower implementations.
We consider the final implementation well tuned and added a few asymptotically sub-optimal changes (\eg exploiting the finite precision of float point numbers and removing the hash maps originally used) that improve the practical performance significantly.
The code is designed as a standalone \texttt{crate}\footnote{Roughly speaking, the Rust equivalent of a software library.} and will be made independently available to the Rust-ecosystem for general sampling problems.

All implementations share a code base for common tasks.
Non-integer computations are based on double-precision floating-point arithmetic.
Repeated expensive operations (\eg evaluating $d^\alpha$ for small degrees $d$) are memoized.
The sampling of $\ell$ \emph{different} hosts per new node is supported by rejecting repeated hosts.
Additionally, we investigate \dyngenrem, which temporarily removes hosts from the data structure. 

Our \pnlpagen implementation uses parallel threads operating on a shared memory.
Synchronization is implemented via hurdles barriers\footnote{\url{https://github.com/jonhoo/hurdles}}.
All concurrently updated values are accessed either via fetch-and-add or compare-exchange primitives using the acquire-release semantics~\cite{c++2011iso}.
In contrast to the description in Algorithm~\ref{algo:pnlpagen}, the code uses a single proposal list which is implemented as a contiguous vector.
To avoid overheads and false-sharing, each threads reserves small blocks to write.
It is also straight-forward to merge the third phase with the first phase of the next batch.
This allows us to reduce the number of barriers required to two.

Observe that \nlpagen and \pnlpagen use contiguous node indices and that each connected node has at least one entry in the proposal list.
This allows us to store the first entry of each node only implicitly, at least halving the memory size and number of access to the proposal list~$P$ (see \cref{fig:proposal-list-size}).

\subsection{Implementation for External Memory}
We implement \emgenpa in \textsc{C++} using the STXXL\footnote{We use a fork of STXXL that has been developed ahead of master \url{https://github.com/bingmann/stxxl}.} library~\cite{DBLP:journals/spe/DementievKS08} which offers tuned external memory versions of fundamental operations like scanning and sorting.
It additionally provides many implementations of different external memory data structures.

Since the degrees change incrementally where all incoming nodes have initial degree $\ell$, we use a hybrid decision tree for the first phase.
More concretely, we manage the smallest degrees in the range of $[1, P(\sqrt{n}) ]$ statically and any larger degree dynamically where $P(x)$ is the smallest power of two greater or equal to $x$.
Even for $n_0 + n = 2^{40}$ this amounts to less than 50\,MB of memory for the statically managed degrees.

For the second phase we use the external memory priority-queues provided by STXXL based on~\cite{DBLP:journals/jea/Sanders00}.

\section{Experiments}
\label{sec:experiments}
In this section, we study the previously discussed algorithms empirically.
All generators produce simple graphs according to the preferential attachment model in \cref{def:pa}.
To focus on this process, we compute the results but do not write out the graph to memory.

\subsection{Internal Memory Algorithms}

\begin{figure}
	\centering
	\includegraphics[width=0.5\linewidth]{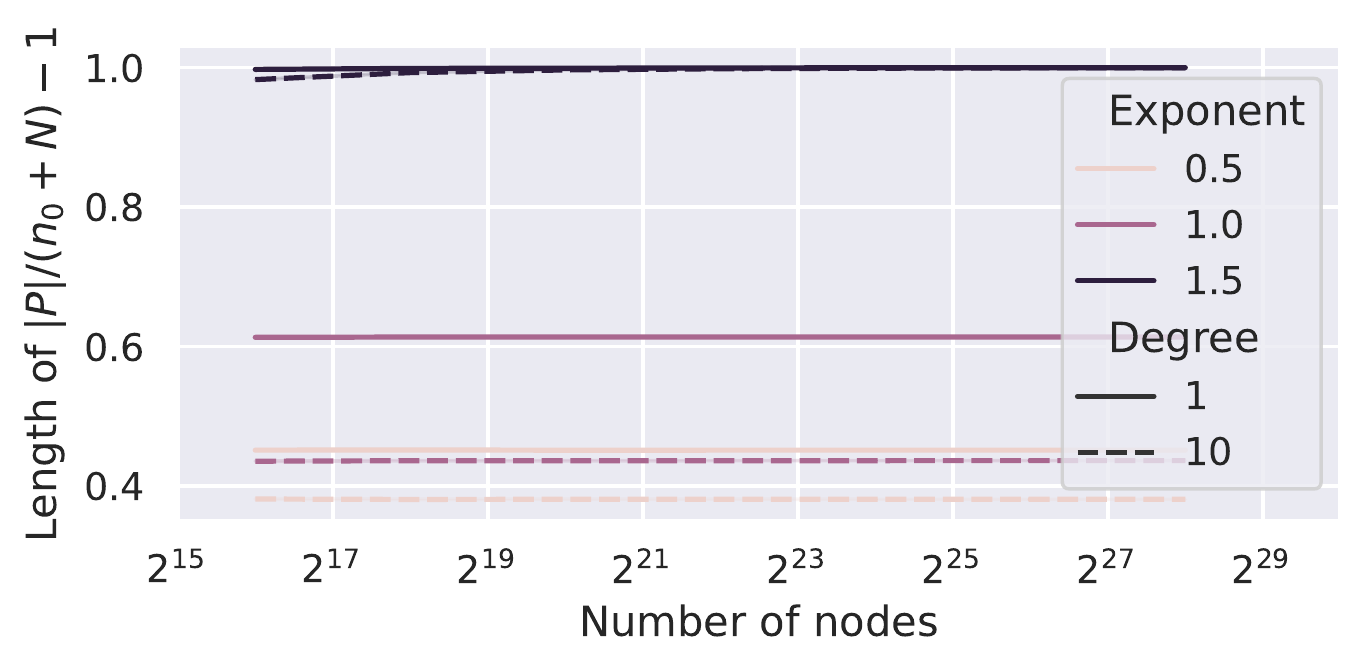}
	\caption{
		Relative length $|P| / (n_0 + N)$ of the proposal list without the $N$ implicitly stored positions.
	}
	\label{fig:proposal-list-size}
	\vspace{-1em}
\end{figure}

\begin{figure}
	\centering
	\includegraphics[height=6cm]{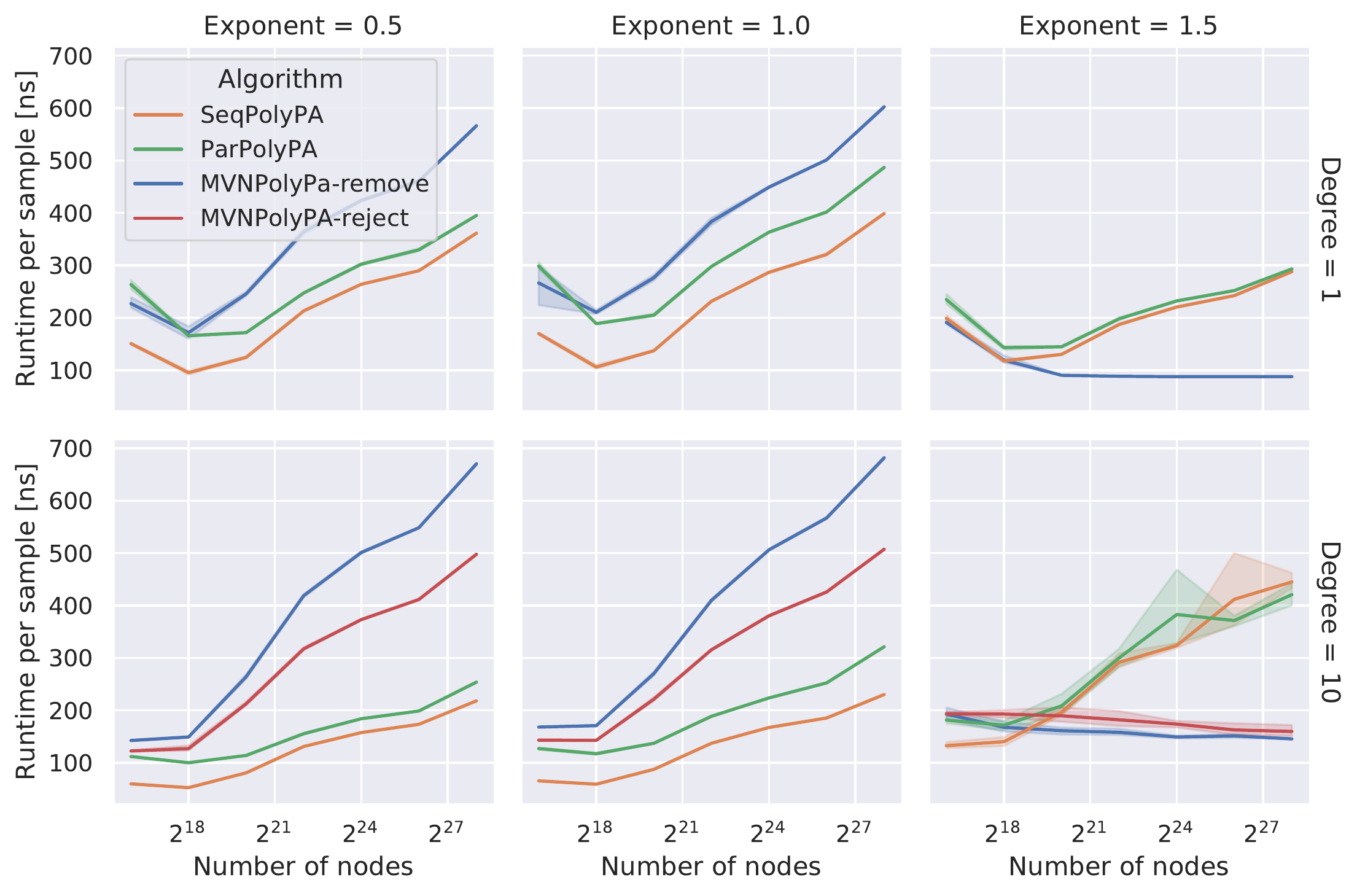}
	\caption{
		Runtime per sample $t_{PA} / (N \ell)$ for different algorithms as function of $N$, $\ell$, and $\alpha$.
	}
	\label{fig:benchmark-seq-scaling}
	\vspace{-1em}
\end{figure}

All internal memory experiments use the implementations described in \cref{subsec:impl-internal-memory} and are build with 
\texttt{rustc 1.66.0-nightly (f83e0266c 2022-10-03)}\footnote{
	See \texttt{Cargo.lock} for the exact versions of all dependencies.
} on an Ubuntu 20.04 machine with an AMD EPYC 7702P processor with 64 cores and 512\,GB of RAM.
To focus our measurements on the sampling phase, we use small $1$-regular seed graphs with $n_0 = 10\ell$ which have a negligible influence on the runtime and the structure of the resulting graph.
We report the wall-time of the preferential-attachment process $t_{PA}$ excluding initial setup costs (\eg seed graph, initial allocation of buffers, et cetera);
this leads to negligible biases between algorithms.

\subsubsection{Sequential performance}
In \cref{subsec:seq-ars}, we bound the expected size of the proposal list to be linear in the graph size.
This analysis is consistent with \cref{fig:proposal-list-size} (Appendix), which reports the proposal size divided by $N$.
We observe no dependency in $N$ over several orders of magnitude and find the proportionality factor to be upper bounded from above by 1 (recall that we store the first entry of each node only implicitly).

\Cref{fig:benchmark-seq-scaling} summaries the scaling behavior of \nlpagen, \pnlpagen, and \dyngen in $N$.
It reports the average time $t_{PA} / (N \ell)$ to obtain a single host for various combinations of $\ell$ and $\alpha$.
Despite near-constant asymptotic predictions, all implementations show a consistent deterioration of performance for larger instances.
We attribute this to unstructured accesses to a growing memory area causing measurable increases in cache misses and back-end stalls.
The additional steep rises in some of \dyngen's plots are due to deeper recursions in the sampling data structure.

While our proposal list-based algorithms are fastest for $\alpha \le 1$, \dyngen performs well for sequential super-linear preferential attachment.
This is due to the expected formation of $\Theta(\ell)$ high-degree nodes in this regime.
These nodes have similar weights and are grouped together by \dyngen which leads to high locality and very few cache misses during sampling.

Observe that it is trivial to hard-code this partition into \nlpagen and \pnlpagen to achieve similar performance.
We opted against it in favor of cleaner measurements that describe the actual performance of the proposal list.
These results are especially representative if the seed graph has a significant contribution to the resulting graph, \ie if $n_0 / N$ is non-vanishing.

For $\ell = 10$, \nlpagen and \pnlpagen need to reject hosts that have been sampled multiple times.
This incurs a small slow-down of less than $1.5\times$.
In this context, \dyngenrem exploits its fully dynamic sampling data structure to temporarily remove hosts (\ie it explicitly samples without replacement).
This is slightly beneficial in the super-linear regime with high locality, while rejection sampling is faster otherwise.

\subsubsection{Parallel performance}
\begin{figure}
	\centering
	\includegraphics[height=5cm]{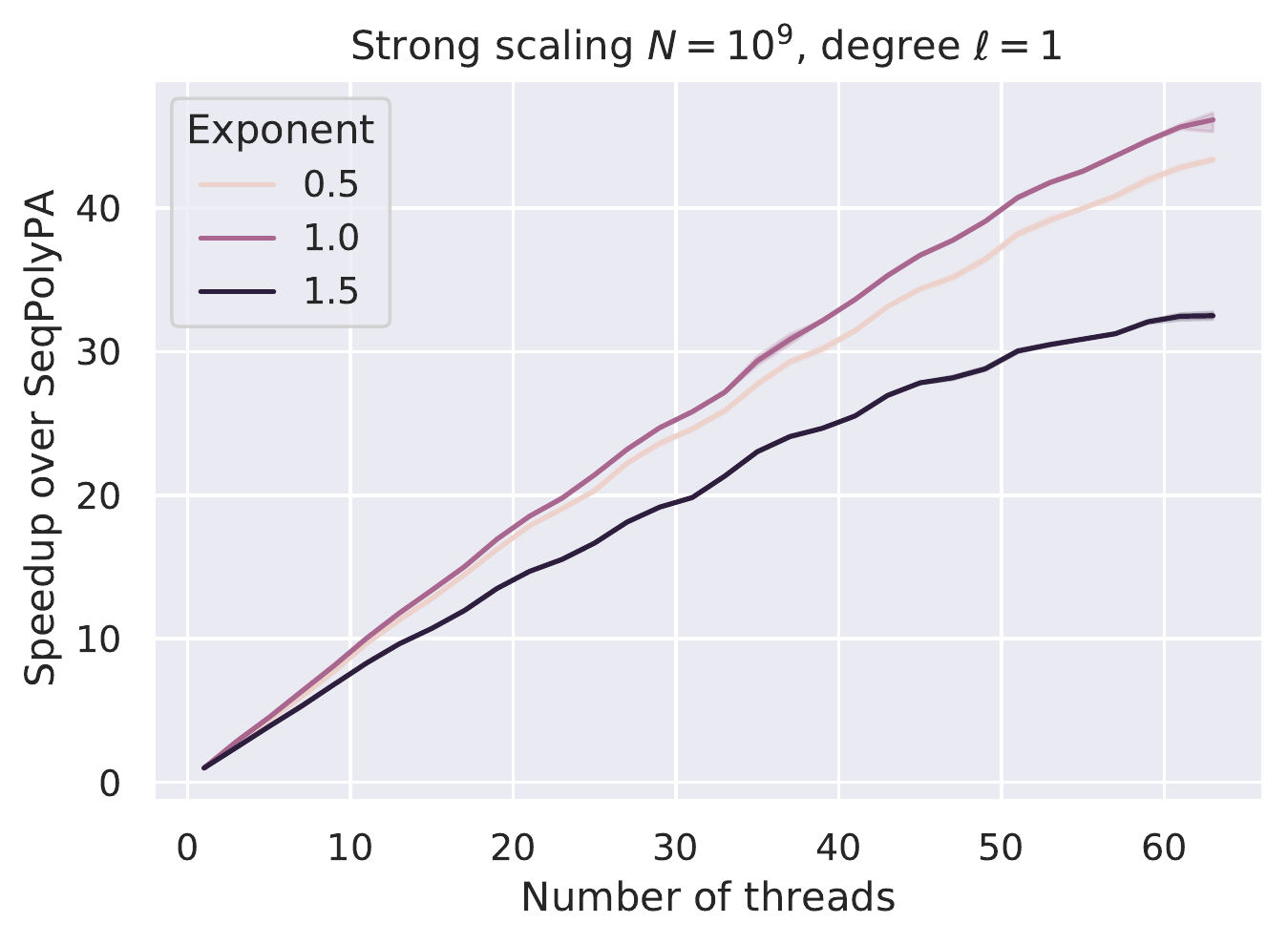}
	\caption{
		Strong scaling of \pnlpagen as speed-up over \nlpagen for $N = 10^9$, $\ell = 1$, and $0.5 \le \alpha \le 1.5$.
	}
	\label{fig:benchmark-par-scaling}
	\vspace{-1em}
\end{figure}

In \cref{obs:weight-increase-small}, we motivate our parallelization by establishing that the number of batches (corresponding to the number of explicit synchronization points) scales as a square root of the graph size / maximum degree.
This is supported by \cref{fig:number-of-batches}, which almost perfectly matches this prediction.

\Cref{fig:benchmark-par-scaling} reports the results for \pnlpagen with $N = 10^9$ as the speed-up over the sequential implementation \nlpagen.
For $\alpha \le 1$, we observe a near linear scaling with a speed-up of up-to $46$ on $63$~threads, dropping to $32$ for $\alpha = 1.5$.

This is despite a comparable number of batches (see \cref{fig:number-of-batches}).
Instead we ---slightly counter-intuitively--- attribute the effect to the increased locality of the super-linear regime, as concurrent updates on high degree nodes lead to more frequent cache-invalidations. 
This can be mitigated using default techniques including randomized update sequences or hierarchical updates.
Similarly to \cite[Sec. 5]{DBLP:conf/esa/BerenbrinkHK0PT20}, one can also merge a small number batches into an epoch, and only update shared data at the epoch's end, leading to a trade-off between increased local overheads and reduced shared updates.

\begin{figure}
	\centering
	\includegraphics[height=3.8cm]{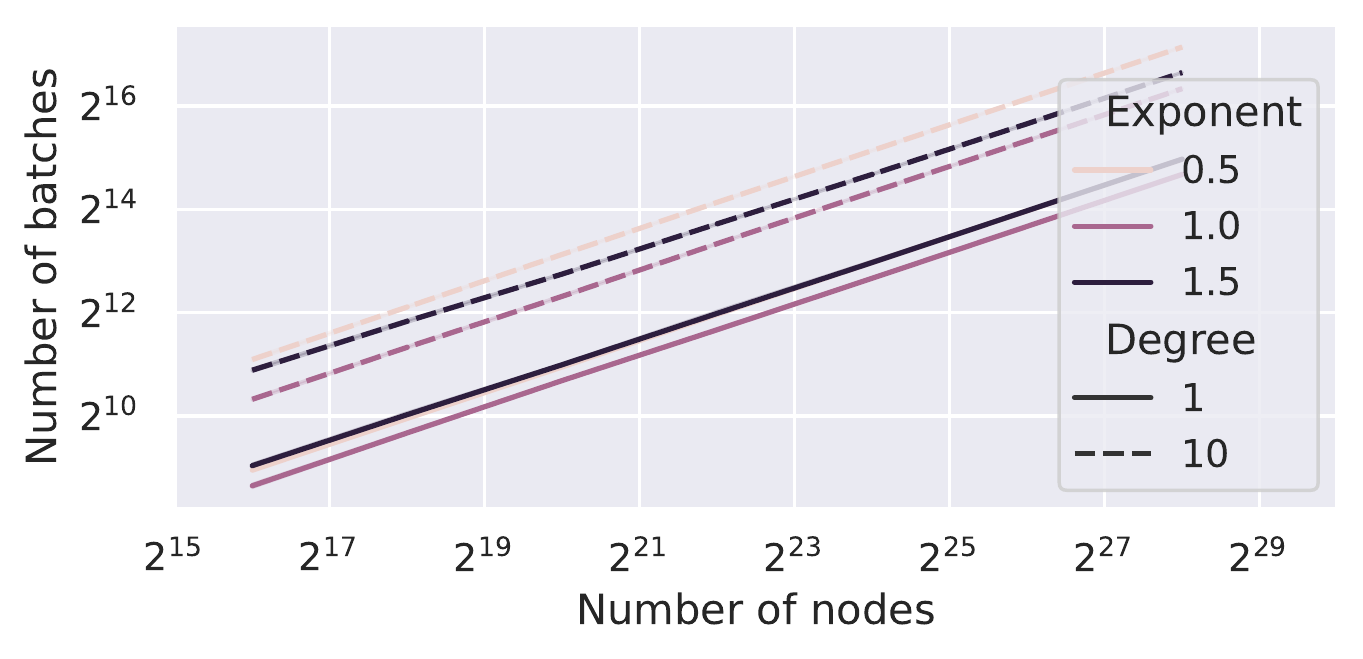}
	\caption{
		The number of batches executed by \pnlpagen for various parameters.
		Observe the slope of $1/2$ almost matching the prediction of \cref{obs:weight-increase-small}.
	}
	\label{fig:number-of-batches}
	\vspace{-1em}
\end{figure}

\subsection{External Memory Algorithms}
In order to assess the computational overhead given by the two-phase sampling, we compare the state-of-the-art sequential external memory algorithm TFP-BA~\cite{DBLP:conf/alenex/MeyerP16} for the linear case to our implementation where we simply set $f(d) = d$.
As an additional reference we consider a fast sequential internal memory implementation of the algorithm of Brandes and Batagelj~\cite{batagelj2005efficient} provided by NetworKit~\cite{DBLP:journals/netsci/StaudtSM16} which we refer to by NK-BA.
Analogously to the experiments presented in~\cite{DBLP:conf/alenex/MeyerP16}, we use a small ring graph with $n_0 = 2 \ell$ nodes as the seed graph for both algorithms and compare their running times for an increasing number of incoming nodes $N$.
The benchmarks are built with GNU g++-9.4 and executed on a machine equipped with an AMD EPYC 7302P processor and 64\,GB RAM running Ubuntu 20.04 using four 500\,GB solid-state disks. 

As illustrated in \cref{fig:extmem-comparison}, the performance of \emgenpa is less than a factor of $2.32$ slower than TFP-BA in the linear case and seems to be independent of $M$.
Furthermore, this discrepancy becomes smaller when including writing the result to disk.
Naturally, for graphs that fit into internal memory NK-BA is the fastest algorithm but becomes infeasible as soon as the edge list exceeds the available internal memory.

To study the influence of $f$ on the running time of \emgenpa, we additionally consider two polynomial models with exponent $\alpha \in \{0.5, 1.5\}$.
In \cref{fig:extmem-models} we see that \emgenpa performs similarly well for all three models.
However, it is noticeable that \emgenpa performs better for models with a larger exponent $\alpha$ due to the increasingly more skewed degree distributions.

\begin{figure}[t]
	\centering
	\includegraphics[height=3.8cm]{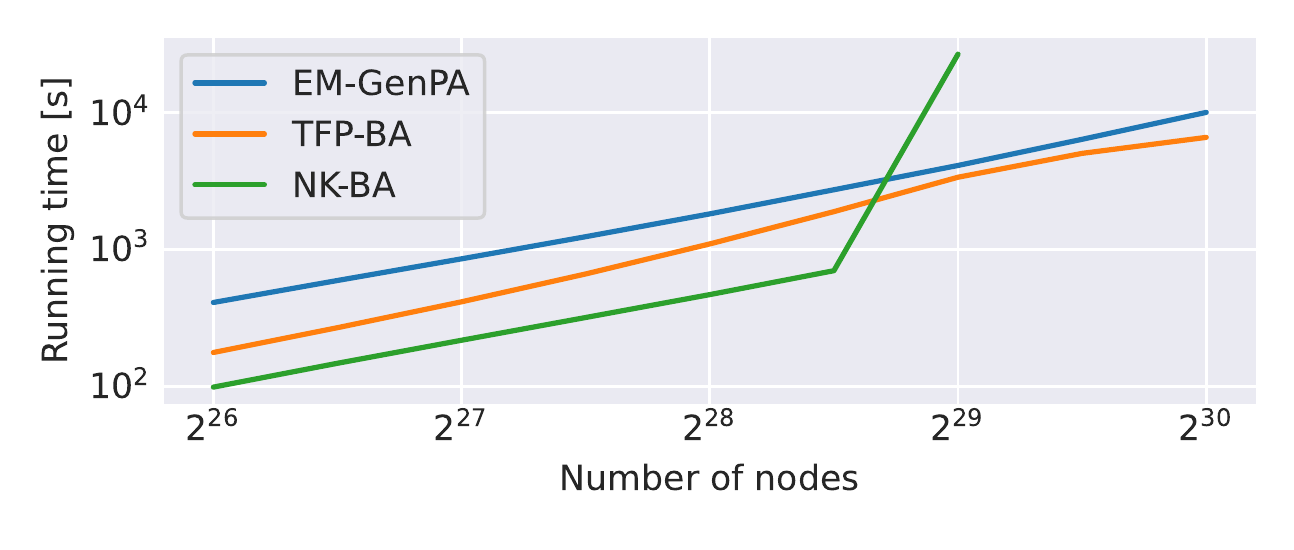}
	\caption{
		Running times of TFP-BA, NK-BA and \emgenpa for $\ell = 10$ and increasing $N$.
	}
	\label{fig:extmem-comparison}
\end{figure}

\begin{figure}[t]
	\centering
	\includegraphics[height=3.8cm]{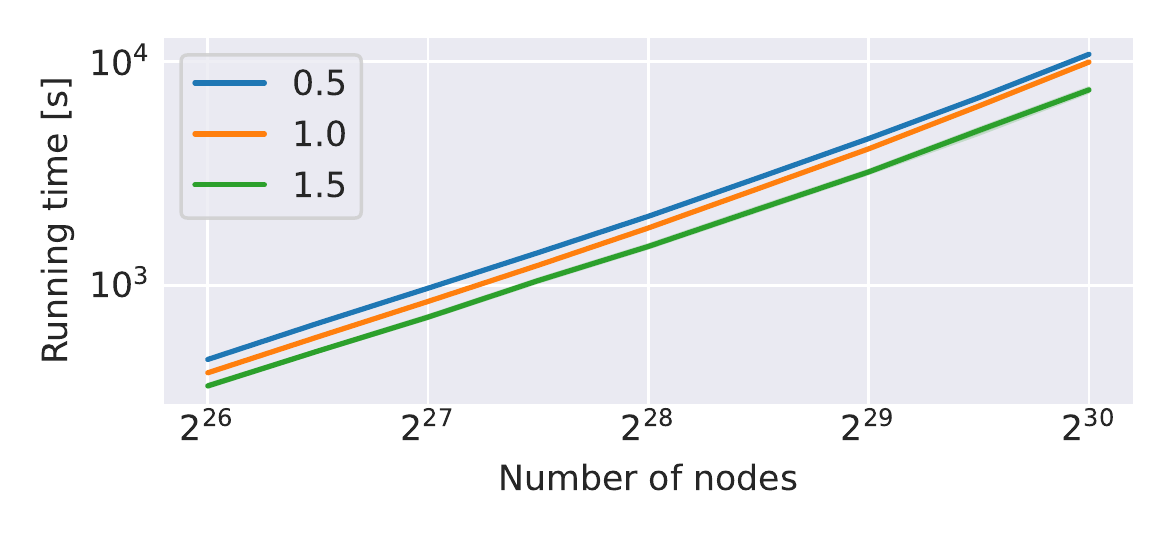}
	\caption{
		Running times of \emgenpa for $\ell = 10, \alpha \in \{0.5, 1.0, 1.5\}$ and increasing $N$.
	}
	\label{fig:extmem-models}
\end{figure}

\section{Conclusions}
\label{sec:conclusions}
We present the sequential algorithm \nlpagen and the first efficient parallel algorithm \pnlpagen for polynomial preferential attachment. 
Furthermore, we present the first I/O-efficient algorithm \emgenpa for general preferential attachment.

For a comparison with the state of the art, we engineer a sequential solution \dyngen that relies on the fully dynamic sampling data structure proposed by~\cite{DBLP:journals/mst/MatiasVN03}.
We find that \nlpagen performs better or similarly well as \dyngen; only for $\alpha > 1$ and adding many nodes, we find that the latter exhibits slightly better memory behavior due to the degenerate nature of the degree distribution in the limit.
In addition, our parallel algorithm \pnlpagen obtains a speed-up of $46$ for $\alpha \leq 1$ and $32$ for $\alpha > 1$.
We expect further improvements by using longer batches that may include multiple dependent samples.

Our experiments suggest that \emgenpa performs similarly well to the external memory state-of-the-art algorithm TFP-BA for the linear case.
Additionally, we investigate the performance of \emgenpa for different polynomial models.
For larger exponents, \emgenpa performs better due to more favorable output degree distributions enabling more cache-efficient degree sampling and incurring less I/Os from the underlying priority-queues.
While \emgenpa is feasible for virtually any set of realistic input parameters it would still be interesting to lift the restriction that the degrees are kept internally.

\clearpage

\bibliography{skewed-pa,dblp-downloaded}

\begin{thebibliography}{10}

\bibitem{DBLP:journals/cacm/AggarwalV88}
Alok Aggarwal and Jeffrey~Scott Vitter.
\newblock The input/output complexity of sorting and related problems.
\newblock {\em Commun. {ACM}}, 31(9):1116--1127, 1988.

\bibitem{DBLP:conf/sc/AlamKM13}
Md.~Maksudul Alam, Maleq Khan, and Madhav~V. Marathe.
\newblock Distributed-memory parallel algorithms for generating massive
  scale-free networks using preferential attachment model.
\newblock In {\em {SC}}, pages 91:1--91:12. {ACM}, 2013.

\bibitem{DBLP:journals/dase/AlamPS19}
Md.~Maksudul Alam, Kalyan~S. Perumalla, and Peter Sanders.
\newblock Novel parallel algorithms for fast multi-gpu-based generation of
  massive scale-free networks.
\newblock {\em Data Sci. Eng.}, 4(1):61--75, 2019.

\bibitem{albertbarabasi02}
R.~Albert and A.~L. Barab{\'{a}}si.
\newblock Statistical mechanics of complex networks.
\newblock {\em Reviews of Modern Physics}, 74(1):47--97, Jan 2002.

\bibitem{allendorf2023parallel}
Daniel Allendorf, Ulrich Meyer, Manuel Penschuck, and Hung Tran.
\newblock Parallel and i/o-efficient algorithms for non-linear preferential
  attachment.
\newblock In {\em 2023 Proceedings of the Symposium on Algorithm Engineering
  and Experiments (ALENEX)}, pages 65--76. SIAM, 2023.

\bibitem{DBLP:journals/algorithmica/Arge03}
Lars Arge.
\newblock The buffer tree: {A} technique for designing batched external data
  structures.
\newblock {\em Algorithmica}, 37(1):1--24, 2003.

\bibitem{DBLP:conf/sac/AzadbakhtBBA16}
Keyvan Azadbakht, Nikolaos Bezirgiannis, Frank~S. de~Boer, and Sadegh
  Aliakbary.
\newblock A high-level and scalable approach for generating scale-free graphs
  using active objects.
\newblock In {\em {SAC}}, pages 1244--1250. {ACM}, 2016.

\bibitem{barabasi2016network}
Albert~L{\'a}szl{\'o} Barab{\'a}si et~al.
\newblock {\em Network science}.
\newblock Cambridge university press, 2016.

\bibitem{BarabasiAlbert99}
Albert-László Barabási and Réka Albert.
\newblock Emergence of scaling in random networks.
\newblock {\em Science}, 286(5439):509--512, 1999.

\bibitem{batagelj2005efficient}
Vladimir Batagelj and Ulrik Brandes.
\newblock Efficient generation of large random networks.
\newblock {\em Phys. Rev. E}, 71:036113, Mar 2005.

\bibitem{DBLP:conf/esa/BerenbrinkHK0PT20}
Petra Berenbrink, David Hammer, Dominik Kaaser, Ulrich Meyer, Manuel Penschuck,
  and Hung Tran.
\newblock Simulating population protocols in sub-constant time per interaction.
\newblock In {\em {ESA}}, volume 173 of {\em LIPIcs}, pages 16:1--16:22.
  Schloss Dagstuhl - Leibniz-Zentrum f{\"{u}}r Informatik, 2020.

\bibitem{bollobas1985random}
B.~Bollob{\'{a}}s.
\newblock {\em Random graphs}.
\newblock Academic Press, 1985.

\bibitem{DBLP:conf/soda/BollobasBCR03}
B{\'{e}}la Bollob{\'{a}}s, Christian Borgs, Jennifer~T. Chayes, and Oliver
  Riordan.
\newblock Directed scale-free graphs.
\newblock In {\em {SODA}}, pages 132--139. {ACM/SIAM}, 2003.

\bibitem{DBLP:conf/stoc/BringmannL13}
Karl Bringmann and Kasper~Green Larsen.
\newblock Succinct sampling from discrete distributions.
\newblock In {\em {STOC}}, pages 775--782. {ACM}, 2013.

\bibitem{c++2011iso}
{C++}~Standards Committee et~al.
\newblock Iso international standard iso/iec 14882: 2011, programming language
  {C++}.
\newblock {\em Geneva, Switzerland: International Organization for
  Standardization (ISO)., Tech. Rep}, 2011.

\bibitem{DBLP:books/daglib/0023376}
Thomas~H. Cormen, Charles~E. Leiserson, Ronald~L. Rivest, and Clifford Stein.
\newblock {\em Introduction to Algorithms, 3rd Edition}.
\newblock {MIT} Press, 2009.

\bibitem{DBLP:journals/jasis/Price76}
Derek~J. de~Solla~Price.
\newblock A general theory of bibliometric and other cumulative advantage
  processes.
\newblock {\em J. Am. Soc. Inf. Sci.}, 27(5):292--306, 1976.

\bibitem{DBLP:journals/spe/DementievKS08}
Roman Dementiev, Lutz Kettner, and Peter Sanders.
\newblock {STXXL:} standard template library for {XXL} data sets.
\newblock {\em Softw. Pract. Exp.}, 38(6):589--637, 2008.

\bibitem{cond-mat/0106144}
S.~N. Dorogovtsev and J.~F.~F. Mendes.
\newblock Evolution of networks.
\newblock {\em Advances in Physics}, 51(4):1079--1187, 2002.

\bibitem{DBLP:conf/icalp/HagerupMM93}
Torben Hagerup, Kurt Mehlhorn, and J.~Ian Munro.
\newblock Maintaining discrete probability distributions optimally.
\newblock In {\em {ICALP}}, volume 700 of {\em Lecture Notes in Computer
  Science}, pages 253--264. Springer, 1993.

\bibitem{DBLPedit:journals/jea/HamannMPTW18}
Michael Hamann, Ulrich Meyer, Manuel Penschuck, Hung Tran, and Dorothea Wagner.
\newblock {I/O}-efficient generation of massive graphs following the \emph{LFR}
  benchmark.
\newblock {\em {ACM} J. Exp. Algorithmics}, 23, 2018.

\bibitem{PhysRevE.65.026107}
Petter Holme and Beom~Jun Kim.
\newblock Growing scale-free networks with tunable clustering.
\newblock {\em Phys. Rev. E}, 65:026107, Jan 2002.

\bibitem{jaja1992introduction}
Joseph J{\'a}J{\'a}.
\newblock An introduction to parallel algorithms.
\newblock {\em Reading, MA: Addison-Wesley}, 10:133889, 1992.

\bibitem{DBLP:conf/cocoon/KleinbergKRRT99}
Jon~M. Kleinberg, Ravi Kumar, Prabhakar Raghavan, Sridhar Rajagopalan, and
  Andrew Tomkins.
\newblock The web as a graph: Measurements, models, and methods.
\newblock In {\em {COCOON}}, volume 1627 of {\em Lecture Notes in Computer
  Science}, pages 1--17. Springer, 1999.

\bibitem{krapivsky2000connectivity}
P.~L. Krapivsky, S.~Redner, and F.~Leyvraz.
\newblock Connectivity of growing random networks.
\newblock {\em Phys. Rev. Lett.}, 85:4629--4632, Nov 2000.

\bibitem{krapivsky2001degree}
P.~L. Krapivsky, G.~J. Rodgers, and S.~Redner.
\newblock Degree distributions of growing networks.
\newblock {\em Phys. Rev. Lett.}, 86:5401--5404, Jun 2001.

\bibitem{DBLP:conf/websci/KunegisBM13}
J{\'{e}}r{\^{o}}me Kunegis, Marcel Blattner, and Christine Moser.
\newblock Preferential attachment in online networks: measurement and
  explanations.
\newblock In {\em WebSci}, pages 205--214. {ACM}, 2013.

\bibitem{DBLPedit:conf/dagstuhl/MaheshwariZ02}
Anil Maheshwari and Norbert Zeh.
\newblock A survey of techniques for designing {I/O}-efficient algorithms.
\newblock In {\em Algorithms for Memory Hierarchies}, volume 2625 of {\em
  Lecture Notes in Computer Science}, pages 36--61. Springer, 2002.

\bibitem{DBLP:journals/mst/MatiasVN03}
Yossi Matias, Jeffrey~Scott Vitter, and Wen{-}Chun Ni.
\newblock Dynamic generation of discrete random variates.
\newblock {\em Theory Comput. Syst.}, 36(4):329--358, 2003.

\bibitem{DBLP:conf/alenex/MeyerP16}
Ulrich Meyer and Manuel Penschuck.
\newblock Generating massive scale-free networks under resource constraints.
\newblock In {\em {ALENEX}}, pages 39--52. {SIAM}, 2016.

\bibitem{DBLP:journals/corr/abs-2003-00736}
Manuel Penschuck, Ulrik Brandes, Michael Hamann, Sebastian Lamm, Ulrich Meyer,
  Ilya Safro, Peter Sanders, and Christian Schulz.
\newblock Recent advances in scalable network generation.
\newblock {\em CoRR}, abs/2003.00736, 2020.

\bibitem{DBLP:journals/tomacs/RajasekaranR93}
Sanguthevar Rajasekaran and Keith~W. Ross.
\newblock Fast algorithms for generating discrete random variates with changing
  distributions.
\newblock {\em {ACM} Trans. Model. Comput. Simul.}, 3(1):1--19, 1993.

\bibitem{DBLP:journals/jea/Sanders00}
Peter Sanders.
\newblock Fast priority queues for cached memory.
\newblock {\em {ACM} J. Exp. Algorithmics}, 5:7, 2000.

\bibitem{DBLP:journals/ipl/Sanders016}
Peter Sanders and Christian Schulz.
\newblock Scalable generation of scale-free graphs.
\newblock {\em Inf. Process. Lett.}, 116(7):489--491, 2016.

\bibitem{DBLP:journals/netsci/StaudtSM16}
Christian~L. Staudt, Aleksejs Sazonovs, and Henning Meyerhenke.
\newblock Networkit: {A} tool suite for large-scale complex network analysis.
\newblock {\em Netw. Sci.}, 4(4):508--530, 2016.

\end{thebibliography}

\end{document}